\providecommand{\tabularnewline}{\\}
\providecommand{\algorithmname}{Algorithm}
  \theoremstyle{plain}
  \newtheorem{lem}{\protect\lemmaname}
  \theoremstyle{plain}
  \newtheorem{prop}{\protect\propositionname}
\author{
Omid~Esrafilian,\textit{ Student Member, IEEE}, Rajeev~Gangula, \textit{Member, IEEE}, and David~Gesbert, \textit{Fellow,~IEEE}

\thanks{This work was supported by the ERC under the European Union Horizon 2020 research and innovation program (Agreement no. 670896).}
\thanks{The authors are with the Department of Communication Systems, EURECOM, Sophia-Antipolis, France (email:\{esrafili, gangula, gesbert\}@eurecom.fr).}
}
\newcommand{\newac}{\newacronym}
\newcommand{\ac}{\gls}
  \providecommand{\lemmaname}{Lemma}
  \providecommand{\propositionname}{Proposition}
\begin{document}
%
%



\ifdefined\SINGLECOLUMN
	\onecolumn
	\setkeys{Gin}{width=0.5\columnwidth}
	\newcommand{\figfontsize}{\footnotesize} 
\else
	\setkeys{Gin}{width=1.0\columnwidth}
	\newcommand{\figfontsize}{\normalsize} 
\fi

\global\long\def\Tr{\text{T}}

\global\long\def\trace{\text{tr}}

\title{Learning to Communicate in UAV-aided Wireless Networks: Map-based
Approaches}
\maketitle
\begin{abstract}
We consider a scenario where an UAV-mounted flying base station is
providing data communication services to a number of radio nodes spread
over the ground. We focus on the problem of resource-constrained UAV
trajectory design with (i) optimal channel parameters learning and
(ii) optimal data throughput as key objectives, respectively. While
the problem of throughput optimized trajectories has been addressed
in prior works, the formulation of an optimized trajectory to efficiently
discover the propagation parameters has not yet been addressed. When
it comes to the communication phase, the advantage of this work comes
from the exploitation of a 3D city map. Unfortunately, the communication
trajectory design based on the raw map data leads to an intractable
optimization problem. To solve this issue, we introduce a map compression
method that allows us to tackle the problem with standard optimization
tools. The trajectory optimization is then combined with a node scheduling
algorithm. The advantages of the learning-optimized trajectory and
of the map compression method are illustrated in the context of intelligent
IoT data harvesting.
\end{abstract}

\begin{IEEEkeywords}
UAV, drone, trajectory design, scheduling, learning, Internet of Things,
3D map.
\end{IEEEkeywords}

\section{Introduction}

\IEEEPARstart{T}{he} use of unmanned aerial vehicles (UAVs) also
known as drones as base stations (BSs) in future wireless communication
networks is currently gaining significant attention for its ability
to yield ultra-flexible deployments, in use cases ranging from disaster
recovery scenarios, coverage of flash-crowd events, and data harvesting
in IoT applications \cite{zenZhanLim_tuto,HayatYanmMuz,MozSaadBennNamDebb}.

Several new and fascinating issues arise from the study of flying
BSs in a wireless network. These can be broadly categorized into placement
and path planning problems. While placement problem deals with finding
flexible yet static locations of the UAV BSs, path planning involves
finding UAV trajectories. In both cases the aim is to optimize metrics
like throughput, network coverage, energy efficiency, etc., \cite{AlhKaLa,MozSaaBen,AlzEiLagYan,EsrGesbertWSA,LadPawHyoWen,ChenGesb,GanKerrEsrGes,WuQiYoRu,KleKabGir,ZenZhanLim,ZenZhang}.

When it comes to placement or trajectory design problems, most existing
solutions rely on simplified channel attenuation models which are
based on either (deterministically guaranteed) \ac{los} links \cite{LadPawHyoWen,ChenGesb,GanKerrEsrGes,WuQiYoRu,ZenZhang,ZenZhanLim,KleKabGir},
or predictive models for the probability of occurrence of a \ac{los}
link \cite{AlzEiLagYan,EsrGesbertWSA,AlhKaLa,MozSaaBen}. In the latter
approach, a \textit{global} statistical model predicts the \ac{los}
availability as a function of, e.g., UAV altitude and distance to
the user. The advantage of the global statistical \ac{los} model
lies in its simplicity for system analysis. However, it lacks actual
performance guarantees for either placement or trajectory design algorithms
when used in a real-life navigation scenario. The key reason for this
is that, the \textit{local} terrain topology may sharply differ from
the predictions drawn from statistical features. In order to circumvent
this problem, the embedding of actual 3D city map data in the UAV
placement algorithms has been recently proposed \cite{LadPawHyoWen,EsrGesbertWSA}.
Map-based approaches help providing a reliable prediction of \ac{los}
availability for any pair of UAV and ground node locations, hence
lead to improved performance guarantees. However, the gain comes at
the expense of computational and memory costs related to processing
of the map data. So far, map-based approaches have been investigated
mainly for static UAV placement \cite{LadPawHyoWen,ChenGesb,EsrGesbertWSA}.
In many scenarios, including IoT data harvesting, there is an interest
in flying along a path that brings the UAV-mounted BS closer to each
and every ground node. However, to the best of our knowledge, none
of the previous works have considered the crucial advantage of exploiting
3D map data in communication-oriented UAV trajectory design.

Another common assumption in previous works is that the channel model
parameters are assumed to be known when designing the UAV trajectory.
However, in reality these parameters need to be learned based on the
measurements collected from the ground users. As a result, an important
question arises: What is an efficient way of collecting radio measurements
by the UAV from the ground users in order to estimate the channel
model parameters?

In this work, we consider an instance of IoT data harvesting scenario
where an UAV flies over a city endowed with a number of scattered
ground nodes (e.g. radio-equipped sensors). We then formulate a resource-constrained
UAV trajectory design problem in order to optimize data throughput
from the ground nodes while capitalizing on 3D map data. Since the
data communication phase critically depends on the knowledge of radio
channel parameters, we also formulate a novel optimal trajectory design
problem from a \textit{parameter learning} point of view. Specifically,
our contributions are as follows:
\begin{itemize}
\item We formulate and solve \textit{a learning} trajectory optimization
problem in order to minimize the estimation error of the channel model
parameters. The devised trajectory allows the UAV to exploit the map
and quickly learn the propagation parameters.
\item Based on the learned parameters, we formulate a joint trajectory and
node scheduling problem which allows us to maximize the traffic communicated
from each node to the UAV in a fair manner. An iterative algorithm
is proposed to solve the optimization problem. It is shown that the
algorithm converges at least to a local optimal solution. While the
algorithm exploits the possibly rich map data, it does so via a \textit{map
compression} method which renders the trajectory optimization problem
differentiable and amenable to standard optimization tools, hence
mitigating a known drawback of map-based approaches. 
\end{itemize}
Note that, the proposed map-compression method allows us to smooth
out the map data while preserving the node location-dependent channel
behavior around that node. The gains brought by the exploitation of
3D map data are illustrated, in both channel parameters learning and
the communication problem, in the context of an urban IoT scenario.

This paper is organized as follows: Section \ref{sec:SysModel} introduces
the system model. In Section III, we formulate and solve the problem
of learning trajectory optimization. In Section IV, we formulate the
joint communication trajectory and node scheduling optimization problem
whose solution is provided in Section V. Numerical results are presented
in Section VI to validate the performance of the proposed algorithms.
Finally, Section VII concludes the paper with some perspectives.

\textbf{Notation: }Matrices are represented by uppercase bold letters,
vectors are represented by lowercase bold letters. The transpose of
matrix $\mathbf{A}$ is denoted by $\mathbf{A}^{\Tr}$.  The trace
and determinant of matrix $\mathbf{A}$ are denoted by $\text{tr}[\mathbf{A}]$
and $\text{det[}\mathbf{A}]$, respectively. The set of integers from
$m$ to $n$, $m<n$, is represented by $[m,\,n]$. The expectation
operator is denoted by $\text{E}[.]$.

\section{SYSTEM MODEL\label{sec:SysModel}}

A wireless communication system where an UAV-mounted flying BS serving
$K$ static ground level nodes (IoT sensors, radio terminals, etc.)
in an urban area is considered. The $k$-th ground node, $k\in\left[1,K\right]$,
is located at ${\bf u}_{k}=[x_{k},y_{k},0]^{\Tr}\in\mathbb{R}^{3}.$
By no means the ground level node assumption is restrictive, the proposed
algorithms in this work can in principle be applied to a scenario
where the nodes are located in 3D. The UAV's mission consists of a
learning phase which is of duration $T_{l}$ and then followed by
a communication phase of duration $T_{c}$. During the learning phase,
the UAV estimates the propagation parameters of the environment by
collecting the radio measurements from the ground users. These estimates
are then exploited in the communication phase to optimally serve the
ground nodes. Note that in this paper, we are treating the learning
and the communication phases that are separated in time. This allows
us to obtain optimal trajectories for both phases independently, i.e.,
if one is only interested in learning or communication scenario this
solution serves the purpose. The joint channel learning and communication
trajectory design is appealing yet a challenging problem and is left
for future work. Whether be it in learning or communication phase,
the time-varying coordinate of the UAV/drone is denoted by ${\bf v}(t)=[x(t),y(t),z(t)]^{\Tr}\in\mathbb{R}^{3}$,
where $z(t)$ represents the altitude of the drone.

For the ease of exposition, we assume that the time period $T_{l}$
and $T_{c}$ are discretized into $N_{l}$ and $N_{c}$ equal-time
slots, respectively. The time slots are chosen sufficiently small
such that the UAV's location, velocity, and channel gains can be considered
to remain constant in one slot. Hence, the UAV's position ${\bf v}(t)$
is approximated by a sequence

\begin{doublespace}
\noindent 
\begin{equation}
{\bf v}[n]=[x[n],y[n],z[n]]^{\Tr},~n\in[1,N_{l}]\ \mbox{(or) }n\in[1,N_{c}].
\end{equation}

\end{doublespace}

\noindent We assume that the ground nodes and the drone are equipped
with GPS receivers, hence the coordinates ${\bf u}_{k},\forall k$
and ${\bf v}[n],~n\in[1,N_{l}]\ \mbox{(or) }n\in[1,N_{c}]$ are known. 

\subsection{Channel Model}

In this section, we describe the channel model that is used for computing
the channel gains between the UAV and the ground users. The model
parameters are estimated in the learning phase from the collected
radio measurements. Classically, the channel gain between two radio
nodes which are separated by distance $d$ meters is modeled as\cite{Goldsmith,ChenYanGes}

\noindent 
\begin{equation}
\gamma_{s}=\frac{\beta_{s}}{d^{\alpha_{s}}}\times\xi_{s},\label{eq:CH_Model}
\end{equation}
where $\alpha_{s}$ is the path loss exponent, $\beta_{s}$ is the
average channel gain at the reference point $d=1$ meter, $\xi_{s}$
denotes the shadowing component, and finally $s\in\left\{ \ac{los},\text{NLoS}\right\} $
emphasizes the strong dependence of the propagation parameters on
\ac{los} or \ac{nlos} scenario. Note that (\ref{eq:CH_Model}) represents
the channel gain which is averaged over the small scale fading of
unit variance. The channel gain in dB can be written as
\begin{equation}
g_{s}=\ss{}_{s}-\alpha_{s}\varphi(d)+\eta_{s},\label{eq:CH_Model_dB}
\end{equation}
where $g_{s}=10\log_{10}\gamma_{s},\ss_{s}=10\log_{10}\beta_{s},\varphi(d)=10\log_{10}\left(d\right)$,
$\eta_{s}=10\log_{10}\xi_{s}$, and $\eta_{s}$ is modeled as a Gaussian
random variable with $\mathcal{N}(0,\sigma_{s}^{2})$.

\subsection{UAV Model}

During the mission, drone's position evolves according to

\begin{subequations}\label{eq:DroneDyn}

\noindent 
\begin{align}
{\bf v}[n+1] & ={\bf v}[n]+\left[\begin{array}{c}
\cos\left(\phi[n]\right)\cos\left(\psi[n]\right)\\
\sin\left(\phi[n]\right)\cos\left(\psi[n]\right)\\
\sin\left(\psi[n]\right)
\end{array}\right]\rho[n]\,,\label{eq:DroneDyn_a}\\
h_{min}\leq & z[n]\leq h_{max},\ \forall n\in[1,N_{l}-1]\ \mbox{(or) }n\in[1,N_{c}-1],\label{eq:Alt_const}
\end{align}

\end{subequations}

\noindent where in the $n$-th time slot, $0\le\rho[n]\le\rho_{max}$
represents the distance traveled by the drone, $0\le\phi[n]\le2\pi$
and $-\frac{\pi}{2}\le\psi[n]\le\frac{\pi}{2}$ represent the heading
and elevation angles, respectively. The maximum distance traveled
in a time slot is denoted by $\rho_{max}$ and it depends on the maximum
velocity. The constraint (\ref{eq:Alt_const}) reflects the fact that
the drone always flies at an altitude higher than $h_{min}$ and lower
than $h_{max}$, with $h_{min}$ being the height of the tallest building
in the city. 

\section{Learning Trajectory Design\label{sec:LearningTrj}}

In this section, our goal is to find the UAV trajectory, over which
the channel measurements are collected from the ground nodes, that
results in the minimum estimation error of the channel model parameters.
While the problem of learning the channel parameters from a pre-determined
measurement data set has been addressed in the prior literature \cite{ChenYanGes,ChenEsrGesMit},
the novelty of our work lies in the concept of optimizing the flight
trajectory \textit{itself} so as to accelerate the learning process.
The channel measurement collection and learning process are described
next.

\subsection{Measurement Collection and Channel Learning}

\noindent In the learning phase, the measurement harvesting is performed
over an UAV trajectory that starts at a base position ${\bf x}_{b}\in\mathbb{R}^{3}$
and ends at a terminal position ${\bf x}_{t}\in\mathbb{R}^{3}$. Mathematically,
\begin{equation}
{\bf v}[1]={\bf x}_{b},~{\bf v}[N_{l}]={\bf x}_{t}.\label{eq:LearnBoundCond}
\end{equation}

\noindent The base position is typically the take-off base for the
UAV while ${\bf x}_{t}$ can be selected in different ways, including
${\bf x}_{t}={\bf x}_{b}$ (loop) or as a location where the communication
services are to begin right after the learning phase. In the $n$-th
time interval, $n\in[1,N_{l}]$, the measurements collected from the
ground nodes can be written as

\noindent 
\[
{\bf g}_{s,n}=\left[g_{s,1},g_{s,2},\cdots,g_{s,\delta_{s,n}}\right]^{\Tr},
\]

\noindent where $g_{s,i}^{\text{ }}$ is the channel gain of the $i$-th
measurement, $i\in[1,\delta_{s,n}]$, and $\delta_{s,n}$ is the number
of measurements obtained for the propagation segment group $s\in\{\text{LoS},\text{NLoS}\}$.
For the \ac{los}/\ac{nlos} classification of the measurements, we
leverage the knowledge of a 3D city map \cite{EsrGes}. Based on such
map, we can predict \ac{los} (un)availability on any given UAV-ground
nodes link from a trivial geometry argument: For a given UAV position,
the ground node is considered in \ac{los} to the UAV if the straight
line passing through the UAV's and the ground node's position lies
higher than any buildings in between.

Using (\ref{eq:CH_Model_dB}), the $i$-th measurement can be modeled
as

\noindent 
\begin{equation}
g_{s,i}^{\text{ }}={\bf {\bf a}}_{s,i}^{\Tr}\,{\bf \boldsymbol{\omega}}_{s}+\eta_{s,i}^{\text{ }},\label{eq:regression-function-1-1-1}
\end{equation}

\noindent where ${\bf a}_{s,i}=[-\varphi(d_{i}),\,1]^{\Tr}$ with
$d_{i}$ being the distance between the drone and ground node in the
$i$-th measurement, ${\bf \boldsymbol{\omega}}_{s}=[\alpha_{s},\,\ss_{s}]^{\Tr}$
is the vector of channel parameters, and $\eta_{s,i}$ denotes the
shadowing component in the $i$-th measurement. The measurements collected
in the $n$-th interval can now be written as

\noindent 
\begin{equation}
\mathbf{g}_{s,n}={\bf A}_{s,n}\,{\bf \boldsymbol{\omega}}_{s}+\boldsymbol{\eta}_{s,n},\label{eq:EachSegMeas-1-1}
\end{equation}
where ${\bf A}_{s,n}=\left[{\bf a}_{s,1},\cdots,{\bf a}_{s,\delta_{s,n}}\right]^{\Tr},\,\boldsymbol{\eta}_{s,n}=[\eta_{s,1}^{\text{ }},\cdots,\eta_{s,\mathcal{\delta}_{s,n}}^{\text{ }}]^{\Tr}$.
Finally, we stack up the measurements gathered by the drone up to
time step $n$ as
\begin{equation}
{\bf \bar{\mathbf{g}}}_{s,n}={\bf \bar{A}}_{s,n}\,{\bf \boldsymbol{\omega}}_{s}+\boldsymbol{\bar{\eta}}_{s,n},\label{eq:TotalCHMeasModel}
\end{equation}

\noindent where ${\bf {\bf \bar{\mathbf{g}}}}_{s,n}=\left[\mathbf{g}_{s,1}^{\Tr},\cdots,\mathbf{g}_{s,n}^{\Tr}\right]^{\Tr},{\bf \bar{A}}_{s,n}=\left[{\bf A}_{s,1}^{\Tr},\cdots,{\bf A}_{s,n}^{\Tr}\right]^{\Tr},\,\text{and }\boldsymbol{\bar{\eta}}_{s,n}=\left[\boldsymbol{\eta}_{s,1}^{\Tr},\cdots,\boldsymbol{\eta}_{s,n}^{\Tr}\right]^{\Tr}$. 

Assuming that the measurements collected over a trajectory are independent,
the maximum likelihood estimation of ${\bf \boldsymbol{\omega}}_{s},s\in\{\text{LoS},\text{NLoS}\}$
based on the measurements collected up to time step $n$ is given
by \cite{ChenYanGes,EsrGes}

\noindent 
\begin{equation}
\hat{\boldsymbol{\omega}}_{s,n}=\left({\bf \bar{A}}_{s,n}^{\Tr}\,{\bf \bar{A}}_{s,n}\right)^{-1}{\bf \bar{A}}_{s,n}^{\Tr}\,{\bf \bar{\mathbf{g}}}_{s,n}.\label{eq:MLE_Sol}
\end{equation}
By substituting (\ref{eq:TotalCHMeasModel}) in (\ref{eq:MLE_Sol}),
we obtain
\begin{equation}
\hat{\boldsymbol{\omega}}_{s,n}-\boldsymbol{\omega}_{s}=\left({\bf \bar{A}}_{s,n}^{\Tr}\,{\bf \bar{A}}_{s,n}\right)^{-1}{\bf \bar{A}}_{s,n}^{\Tr}\boldsymbol{\bar{\eta}}_{s,n}.
\end{equation}

\noindent Since $\hat{\boldsymbol{\omega}}_{s,n}$ is unbiased, the
mean squared error of the estimated parameters can be obtained as
\cite{RogGiro} 
\begin{align}
\text{E}\left\Vert \hat{\boldsymbol{\omega}}_{s,n}-\boldsymbol{\omega}_{s}\right\Vert ^{2} & =\trace\left[Cov\left\{ \hat{\boldsymbol{\omega}}_{s,n}\right\} \right]\nonumber \\
 & =\sigma_{s}^{2}\,\trace\left[\left({\bf \bar{A}}_{s,n}^{\Tr}\,{\bf \bar{A}}_{s,n}\right)^{-1}\right].\label{eq:Param_MSE-1}
\end{align}

\noindent Let
\[
e_{s}[n]\triangleq\trace\left[\left({\bf \bar{A}}_{s,n}^{\Tr}\,{\bf \bar{A}}_{s,n}\right)^{-1}\right],
\]

\noindent and assuming that $\sigma_{\ac{nlos}}^{2}=\kappa\cdot\sigma_{\ac{los}}^{2},\,\kappa\geq1$
\cite{3GPP}, the total estimation error in both propagation segments
is given by
\begin{equation}
\sum_{s}\text{E}\left\Vert \hat{\boldsymbol{\omega}}_{s,n}-\boldsymbol{\omega}_{s}\right\Vert ^{2}=\sigma_{\ac{los}}^{2}\left(e_{\ac{los}}[N_{l}]+\kappa\,e_{\ac{nlos}}[N_{l}]\right).\label{eq:LearningTrjError}
\end{equation}
Note that a full rank ${\bf \bar{A}}_{s,n}$ is assumed in calculating
the error for both \ac{los} and \ac{nlos} categories over the course
of the trajectory. If there are not enough measurements in a particular
segment by the end of the trajectory, the estimation error is assigned
as infinity in that segment.

\subsection{Optimization Problem}

The optimal learning trajectory that minimizes the estimation error
can be formulated as

\noindent \begin{subequations}\label{eq:Learn_GenOpt}
\begin{align}
\min_{\Phi,\Psi,\mathcal{R}}\  & e_{\ac{los}}[N_{l}]+\kappa\,e_{\ac{nlos}}[N_{l}]\label{eq:GenLearPathOpt}\\
\mbox{s.t.\ } & \eqref{eq:DroneDyn},\,\eqref{eq:LearnBoundCond}
\end{align}

\end{subequations}

\noindent where $\Phi,\Psi,$ and $\mathcal{R}$ are defined as 
\begin{align*}
\Phi & =\left\{ 0\le\phi[n]<2\pi,\,n\in[1,N_{l}-1]\right\} ,\\
\Psi & =\left\{ -\frac{\pi}{2}\le\psi[n]\le\frac{\pi}{2},\,n\in[1,N_{l}-1]\right\} ,\\
\mathcal{R} & =\left\{ 0\le\rho[n]\le\rho_{max},\,n\in[1,N_{l}-1]\right\} .
\end{align*}

\noindent As the estimation error depends on the matrix ${\bf \bar{A}}_{s,N_{l}}$
which has a very complicated expression in terms of $\phi[n],\psi[n],$
and $\rho[n]$, it is hard to obtain an analytical solution for problem
\eqref{eq:Learn_GenOpt} in general. Therefor, we tackle \eqref{eq:Learn_GenOpt}
by discretizing the optimization variables and then employing dynamic
programming (DP) \cite{Kirk} to find the solution. To apply DP, the
estimation error $e_{s}[N_{l}]$ needs to be rewritten as follows
\begin{align}
e_{s}[N_{l}] & =\trace\left[\left(\left[\begin{array}{c}
{\bf \bar{A}}_{s,N_{l}-1}\\
{\bf A}_{s,N_{l}}
\end{array}\right]^{\Tr}\,\left[\begin{array}{c}
{\bf \bar{A}}_{s,N_{l}-1}\\
{\bf A}_{s,N_{l}}
\end{array}\right]\right)^{-1}\right]\nonumber \\
 & \overset{(a)}{=}e_{s}[N_{l}-1]-r_{s}[N_{l}]\nonumber \\
 & \overset{(b)}{=}e_{s}[1]-\sum_{n=2}^{N_{l}}r_{s}[n],
\end{align}

\noindent where we denote $r_{s}[n]$ as the amount of improvement
in the estimate within time slot $n$, and it is given by
\begin{equation}
r_{s}[n]=\trace\left[{\bf H}_{s,n}\,{\bf A}_{s,n}^{\Tr}\left({\bf I}+{\bf A}_{s,n}\,{\bf H}_{s,n}\,{\bf A}_{s,n}^{\Tr}\right)^{-1}\,{\bf A}_{s,n}\,{\bf H}_{s,n}\right],
\end{equation}

\noindent ${\bf H}_{s,n}\triangleq\left({\bf \bar{A}}_{s,n-1}^{\Tr}\,{\bf \bar{A}}_{s,n-1}\right)^{-1}$,
${\bf I}$ is the identity matrix, (a) follows from the matrix inversion
lemma, and (b) follows from the recursive relation. Now \eqref{eq:Learn_GenOpt}
can be reformulated as 

\noindent \begin{subequations}\label{eq:Learn_ReGenOpt}
\begin{align}
\min_{\Phi,\Psi,\mathcal{R}}\  & \sum_{n=1}^{N_{l}}\tilde{e}_{\ac{los}}[n]+\kappa\,\tilde{e}_{\ac{nlos}}[n]\label{eq:GenLearPathOpt-1}\\
\mbox{s.t.\ } & \eqref{eq:DroneDyn},\,\eqref{eq:LearnBoundCond}
\end{align}

\end{subequations}

\noindent where

\noindent 
\[
\tilde{e}_{s}[n]=\begin{cases}
e_{s}[1] & n=1\\
-r_{s}[n] & n\in[2,N_{l}]
\end{cases}.
\]

\subsection{Dynamic Programming\label{subsec:DiscreteApproxAndDP}}

To solve \eqref{eq:Learn_ReGenOpt} by DP, we constraint (and thus
approximate) the possible drone locations and the optimization variables
to a limited alphabet and then use Bellman's recursion to compute
the optimal discrete trajectory. We start by introducing some notations.

\noindent Let ${\bf v}[n],\,n\in[1,N_{l}]$ denotes the states and
$\boldsymbol{\pi}[n]=\left[\phi[n],\psi[n],\rho[n]\right]^{\Tr}$
represents the input action at time $n\in[1,N_{l}-1]$ such that

\noindent 
\begin{align}
\phi[n] & \in\left\{ 0,\frac{\pi}{4},\frac{\pi}{2},\frac{3\pi}{4},\pi,\frac{5\pi}{4},\frac{3\pi}{2},\frac{7\pi}{4}\right\} ,\nonumber \\
\psi[n] & \in\left\{ -\frac{\pi}{2},-\frac{\pi}{4},0,\frac{\pi}{4},\frac{\pi}{2}\right\} ,\nonumber \\
\rho[n] & \in\left\{ 0,a_{h},a_{v},a_{h}\sqrt{2},\sqrt{a_{h}^{2}+a_{v}^{2}},\sqrt{2a_{h}^{2}+a_{v}^{2}}\right\} ,\label{eq:InputAlpbet}
\end{align}

\noindent where $a_{h}$ and $a_{v}$ denote the discretization unit
used in discretizing the city map into a 3D grid (hereafter termed
as path graph) of admissible drone locations. Depending on the action
${\bf \boldsymbol{\pi}}[n]$ in ${\bf v}[n]$, the state ${\bf v}[n+1]$
can be computed by using \eqref{eq:DroneDyn} and (\ref{eq:InputAlpbet}).
In Fig. \ref{fig:Path_Graph}, a part of the path graph, arbitrary
base position ${\bf x}_{b},$ and terminal position ${\bf x}_{t}$
are illustrated. The vertices and the edges of the path graph can,
respectively, be interpreted as the admissible states and input actions
in each time slot.

In order not to exceed the flight time constraint $T_{l}$, $N_{l}$
can be selected as \footnote{Note that this is a conservative choice. In practice, $N_{l}$ could
be slightly higher given the UAV may use some of the short edges.}

\noindent 
\[
N_{l}=\left\lfloor \frac{T_{l}}{T_{e}}\right\rfloor ,
\]

\begin{spacing}{0.92}
\noindent where $\left\lfloor .\right\rfloor $ denotes the floor
function and $T_{e}=\frac{\sqrt{2a_{h}^{2}+a_{v}^{2}}}{v_{max}}$
is the minimum required time for taking the longest edge between two
adjacent vertices in the path graph while the drone moves with maximum
speed $v_{max}$.
\end{spacing}

DP in a forward manner is now used to solve for \eqref{eq:Learn_ReGenOpt}
by taking into account the finite alphabet constraint (\ref{eq:InputAlpbet}).
Thus, by reformulating (\ref{eq:GenLearPathOpt-1}) we can associate
with our problem the performance index
\begin{equation}
J_{i}({\bf v}[i])=\Omega({\bf v}[1])+\sum_{n=2}^{i}L[n],
\end{equation}

\noindent where $[2,i]$ is the time interval of interest and $L[n]=\tilde{e}_{\ac{los}}[n]+\kappa\,\tilde{e}_{\ac{nlos}}[n]$.
$\Omega({\bf v}[1])$ stands for the initial cost and given by
\[
\Omega({\bf v}[1])=\begin{cases}
L[1] & {\bf v}[1]={\bf x}_{b}\\
\infty & \mbox{otherwise }
\end{cases}.
\]

According to Bellman's equation, the optimal cost up to time $n+1$
is equal to

\begin{subequations}\label{eq:DP_OPT}

\noindent 
\begin{align}
J_{n+1}^{*}({\bf v}[n+1]) & =\min_{{\bf \boldsymbol{\pi}}[n]}\left\{ L[n+1]+J_{n}^{*}({\bf v}[n])\right\} ,\,n\in[1,N_{l}-1],\label{eq:BellmanEq}\\
J_{1}^{*}({\bf v}[1]) & =\Omega({\bf v}[1]),
\end{align}

\end{subequations}

\noindent where ${\bf \boldsymbol{\pi}}[n]$ is the input action vector.
Thus, the optimal input action $\boldsymbol{\pi}^{*}[n]$ at time
$n$ is the one that achieves the minimum in (\ref{eq:BellmanEq}).
Finally, the optimal policy (trajectory) can be found by solving \eqref{eq:DP_OPT}
for all $n\in[1,N_{l}-1]$ and by choosing ${\bf v}[N_{l}]={\bf x}_{t}$.
Note that the number of computations required to find the optimal
trajectory is given by \cite{Kirk} 
\[
\mathcal{V}\cdot\Pi\cdot N_{l},
\]

\noindent where $\mathcal{V}$ is the number of admissible states
(i.e. the number of vertices in the path graph), and $\Pi$ is the
number of quantized admissible input actions. 

Note that the error $L[n]$ only depends on the UAV location through
its distance from the ground users and the \ac{los}/\ac{nlos} status.
Since we have the knowledge of the 3D map and the ground nodes' locations,
\eqref{eq:DP_OPT} can be solved offline without collecting any measurements.
Once the optimal trajectory is calculated, UAV follows this trajectory
to collect the measurements and then estimates the channel parameters.

\begin{spacing}{0.5}
\begin{figure}[t]
\begin{centering}
\includegraphics[width=5cm]{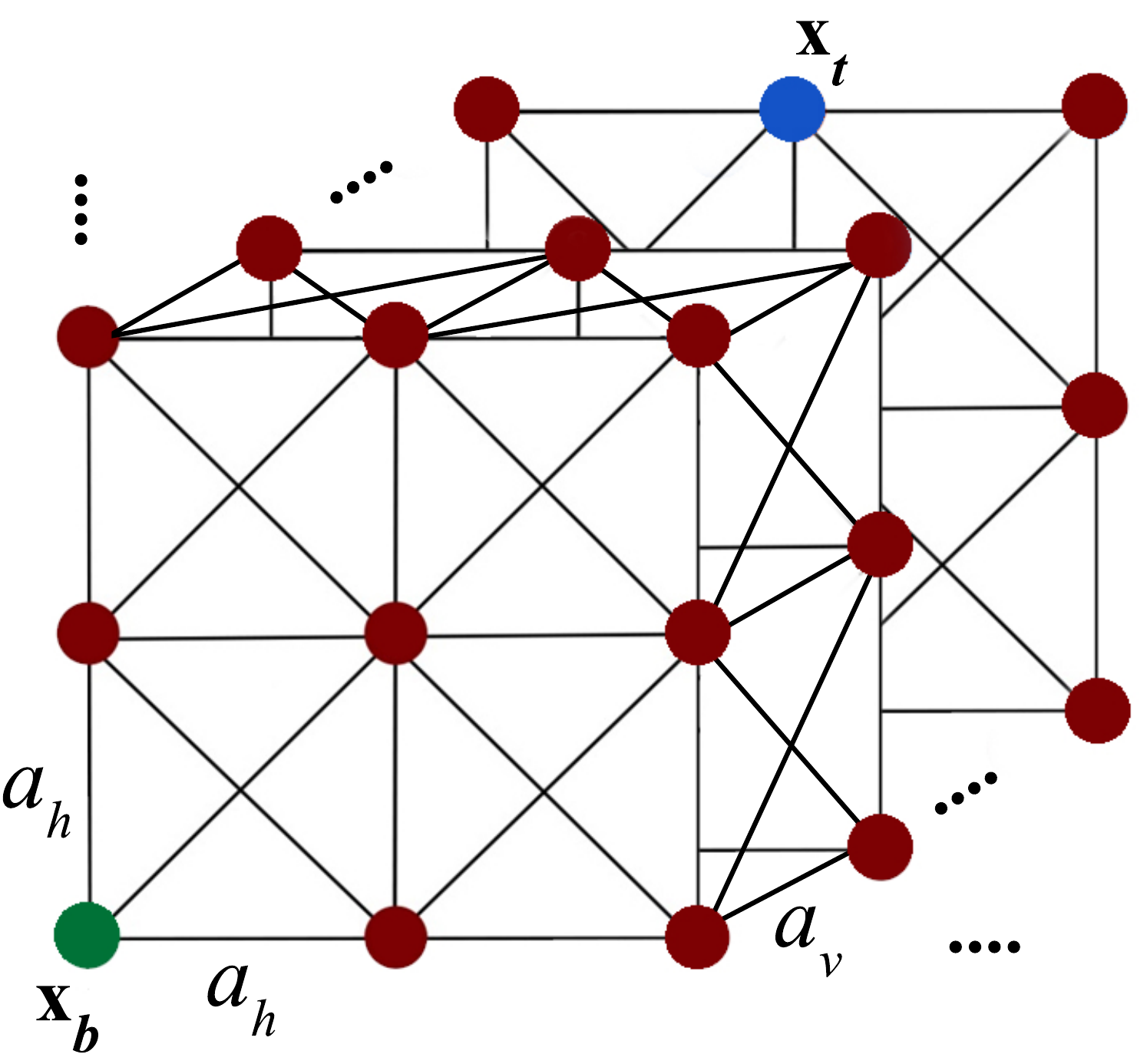}
\par\end{centering}
\caption{A fragment of the 3D path graph, arbitrary base position ${\bf x}_{b},$
and terminal position ${\bf x}_{t}$ .\label{fig:Path_Graph}}
\end{figure}

\end{spacing}

\section{Communication trajectory Optimization\label{sec:CommTrj}}

Based on the acquired knowledge of the channel parameters from the
learning phase, we are now concerned with the design of a \textit{communication
trajectory} in an uplink IoT data harvesting scenario. For the ease
of exposition, the communication phase is designed based on the perfect
channel model parameter estimates, while the impact of imperfect estimation
is addressed in Section \ref{subsec:Sim_Comm_TRJ}.

\subsection{Communication System Model}

We assume that the ground nodes are served by the drone in a time-division
multiple access (TDMA) manner. Let $q_{k}[n]$ denotes the scheduling
variable, then the TDMA constraints can be written as

\noindent 
\begin{align}
 & \sum_{k=1}^{K}q_{k}[n]\le1\,,\,n\in[1,N_{c}],\label{eq:TDD_Cons}\\
 & q_{k}[n]\in\left\{ 0,1\right\} ,\,n\in[1,N_{c}],\,k\in[1,K],\label{eq:Sch_Var_Cons}
\end{align}

\noindent where $q_{k}[n]=1$ indicates that the node $k$ is scheduled
in time slot $n$. For the scheduled node, the average throughput
is given by

\noindent 
\begin{equation}
C_{k}[n]=\text{E}\,\log_{2}\left(1+\frac{P\gamma_{k}[n]}{\sigma^{2}}\right),
\end{equation}

\noindent where $\gamma_{k}[n]$ is the channel gain between the $k$-th
node and the UAV at time step $n$, $P$ denotes the up-link transmission
power of the ground node, and the additive white Gaussian noise power
at the receiver is denoted by $\sigma^{2}$. Hence, the average achievable
throughput of the $k$-th ground node over the course of the communication
trajectory is given by

\noindent 
\begin{equation}
C_{k}=\frac{1}{N_{c}}\sum_{n=1}^{N_{c}}q_{k}[n]C_{k}[n].
\end{equation}

\subsection{Joint Scheduling and Trajectory Optimization }

We consider the problem of efficient data harvesting where data originates
from ground nodes and efficiency is meant in a max-min sense across
the nodes. The problem of maximizing the minimum average throughput
among all ground nodes by jointly optimizing node scheduling and UAV's
trajectory can be formulated as

\begin{subequations}\label{eq:GeneralOptProb}

\noindent 
\begin{align}
\max_{\mathcal{X},z,Q}\  & \underset{k\in[1,K]}{\min}\,C_{k}\label{eq:GeneralOptimizationProb}\\
\mbox{s.t.\ } & \left\Vert {\bf v}[n]-{\bf v}[n-1]\right\Vert \le\rho_{max}\ ,n\in[2,N_{c}],\label{eq:GenProb_Dmax}\\
 & {\bf v}[1]={\bf v}[N_{c}],\label{eq:GenProb_loopconst}\\
 & \eqref{eq:Alt_const},\,\eqref{eq:TDD_Cons},\,\eqref{eq:Sch_Var_Cons},
\end{align}

\end{subequations}

\noindent where $Q=\{q_{k}[n],\,\forall n,\,\forall k\}$ is the set
of scheduling variables, and $\mathcal{X}=\{(x[n],y[n]),\:\forall n\}$
denotes the discretized trajectory set of length $N_{c}$ in 2D. We
assume that the drone flies at a fixed altitude $z[n]=z,\,\forall n$.
The maximum speed constraint of the UAV is reflected in (\ref{eq:GenProb_Dmax}),
where $\rho_{max}=v_{max}T_{c}/N_{c}$. (\ref{eq:GenProb_loopconst})
implies a possible loop trajectory constraint\footnote{This is by no means a restriction, the starting and the terminal points
of the trajectory can be any arbitrary locations.}.

Problem \eqref{eq:GeneralOptProb} is challenging to solve due to
the following issues:
\begin{itemize}
\item The scheduling variables $q_{k}[n]$ are binary and include integer
constraints.
\item The objective function (\ref{eq:GeneralOptimizationProb}) is a non-convex
with respect to the drone trajectory variables. 
\item Since the 3D city map, node locations, and UAV location at time $n$
are known, then in theory the \ac{los} or \ac{nlos} status of the
link can be finely predicted and, hence, the link gain $\gamma_{k}[n]$
can be computed from (\ref{eq:CH_Model}) up to the random shadowing.
Unfortunately, such a direct exploitation of the rich \textit{raw}
map data leads to a highly non-differentiable problem in \eqref{eq:GeneralOptProb}.
\end{itemize}
\noindent We overcome these difficulties by approximation using the
same framework as \cite{WuQiYoRu} by employing the block-coordinate
descent \cite{XuYin} and sequential convex programming \cite{DinDie}
techniques. However, the key difference is that we optimize the drone's
altitude, and also exploit the 3D city map by introducing a \textit{statistical
map compression} approach that enables us to take into account the
\ac{los} and \ac{nlos} predictions.

\subsection{\ac{los} Probability Model Using Map Compression }

Statistical map compression approach relies on converting 3D map data
to build a reliable node location dependent \ac{los} probability
model. The \ac{los} probability for the link between the drone located
at altitude $z$ and the $k$-th ground node in the $n$-th time slot
is given by

\noindent 
\begin{equation}
p_{k}[n]=\frac{1}{1+\exp(-a_{k}\theta_{k}[n]+b_{k})},\label{eq:LoS_Prob}
\end{equation}

\noindent where $\theta_{k}[n]=\arctan(z/r_{k}[n])$ denotes the elevation
angle and $r_{k}[n]$ is the ground projected distance between the
drone and the $k$-th node located at ${\bf u}_{k}$ in the time slot
$n$, and $\{a_{k},b_{k}\}$ are the model coefficients.

The \ac{los} probability model coefficients $\{a_{k},b_{k}\}$ are
learned (i.e. by utilizing logistic regression method\cite{RogGiro})
by using a training data set formed by a set of tentative UAV locations
around the $k$-th ground node along with the true \ac{los}/\ac{nlos}
label obtained from the 3D map. Interestingly, the model in (\ref{eq:LoS_Prob})
can be seen as a localized extension of the classical (global) \ac{los}
probability model used in \cite{AlhKaLa,MozSaaBen}. The key difference
lies in the fact that, a \textit{local} \ac{los} probability model
will give performance guarantees which a global model cannot.

Using (\ref{eq:LoS_Prob}), the average channel gain of the link between
the drone and the $k$-th ground node in the $n$-th time slot is

\noindent {\small{}
\begin{align}
\text{E}[\gamma_{k}[n]]= & \left(\frac{d_{k}^{(A-1)\alpha_{\ac{los}}}-B}{1+\exp(-a_{k}\theta_{k}+b_{k})}+B\right)\frac{\beta_{\ac{los}}}{d_{k}^{\alpha_{\ac{nlos}}}},\label{eq:AverageChGain}
\end{align}
}{\small \par}

\noindent where $B=\frac{\beta_{\ac{nlos}}}{\beta_{\ac{los}}},\,A=\frac{\alpha_{\ac{nlos}}}{\alpha_{\ac{los}}}\ge1$,
and $d_{k}[n]=\sqrt{z^{2}+r_{k}^{2}[n]}$ is the distance between
the $k$-th ground node and the drone. The details of the proof are
given in Appendix \ref{subsec:ProofOfAvgChGain}.

\section{proposed solution for communication trajectory optimization}

\noindent In this section, we first approximate the original optimization
problem in \eqref{eq:GeneralOptProb} to a \textit{map-compressed}
problem and then present an iterative algorithm using block coordinate
descent for solving it. Using (\ref{eq:AverageChGain}) and the Jensen's
inequality, the average throughput upper-bound then can be written
as

\noindent 
\begin{equation}
C_{k}^{\text{\text{up}}}=\frac{1}{N_{c}}\sum_{n=1}^{N_{c}}q_{k}[n]C_{k}^{\text{\text{up}}}[n]\ ,\,k\in[1,K],
\end{equation}

\noindent where

\noindent 
\begin{equation}
C_{k}^{\text{\text{up}}}[n]=\log_{2}\left(1+\frac{P\,\text{E}[\gamma_{k}[n]]}{\sigma^{2}}\right).\label{eq:Cap_upperBound}
\end{equation}

We then approximate the original problem in \eqref{eq:GeneralOptProb}
into the following \textit{map-compressed} problem:

\begin{subequations}\label{eq:RelaxedProb}

\noindent 
\begin{align}
\max_{\mathcal{X},z,Q,\mu} & \ \mu\label{eq:RelaxedAveProb}\\
\mbox{s.t.\ } & C_{k}^{\text{\text{up}}}\ge\mu\ ,\,\forall k,\label{eq:RelaxedAveProb_UserCap}\\
 & 0\le q_{k}[n]\le1\ ,\,\forall k,\,\forall n,\label{eq:Relaxed_Sche_Var}\\
 & \eqref{eq:GenProb_Dmax},\,\eqref{eq:GenProb_loopconst},\,\eqref{eq:Alt_const},\,\eqref{eq:TDD_Cons},\label{eq:Relaxed_Prob_Sch-1}
\end{align}

\end{subequations}

\noindent where the constraints in (\ref{eq:Relaxed_Sche_Var}) represent
the relaxation of the binary scheduling variable into continuous variables.
Moreover, map compression allows us to circumvent the non-differentiability
aspect of the original problem \eqref{eq:GeneralOptProb} by compressing
the 3D map information into a probabilistic LoS model. However, \eqref{eq:RelaxedProb}
is still difficult to solve since it is a joint scheduling and path
planning problem and is not convex. To make this problem more tractable,
we split it up into three optimization sub-problems and then classically
iterate between them to converge to a final solution. Note that, the
iteration index of the proposed algorithm is denoted by ``$j$''.

\subsection{Scheduling }

For a given UAV planar trajectory $\mathcal{X}$ and altitude $z$,
the ground node scheduling can be optimized as

\begin{subequations}\label{eq:ScheProb}

\noindent 
\begin{align}
\max_{Q,\mu}\  & \ \mu\\
\mbox{s.t.\ } & \ C_{k}^{\text{\text{up}}}\ge\mu\ ,\,\forall k,\\
 & \eqref{eq:TDD_Cons},\,\eqref{eq:Relaxed_Sche_Var}.\nonumber 
\end{align}

\end{subequations}

\noindent This problem is a standard Linear Program (LP) and can be
solved by using any optimization tools such as CVX \cite{CVX}.

\subsection{Optimal Horizontal UAV Trajectory\label{subsec:PathPlanningSec} }

For a given scheduling decision $Q$, and drone's altitude $z$, we
now aim to find the optimal planar trajectory by solving

\begin{subequations}\label{eq:AveOptProb}

\noindent 
\begin{align}
\max_{\mathcal{X},\,\mu}\  & \mu\label{eq:AveOptProb_obj}\\
\mbox{s.t.\ } & C_{k}^{\text{\text{up}}}\ge\mu\ ,\,\forall k,\label{eq:AveOptProb_UserCap}\\
 & \eqref{eq:GenProb_Dmax},\,\eqref{eq:GenProb_loopconst}.
\end{align}

\end{subequations}

\noindent The optimization problem \eqref{eq:AveOptProb} is not convex,
since the constraint (\ref{eq:AveOptProb_UserCap}) is neither convex
nor concave. In general, there is no efficient method to obtain the
optimal solution. Therefore, we adopt sequential convex programming
technique for solving \eqref{eq:AveOptProb}. To this end, the following
results are helpful.
\begin{lem}
\label{lem:lem1}The function $h(x,y)\triangleq\text{\ensuremath{\log}}\left(1+f\left(x\right)g\left(y\right)\right)$
is convex if $\hat{h}(x,y)\triangleq\text{\ensuremath{\log}}\left(f\left(x\right)g\left(y\right)\right)$
is convex and $f(x)>0,\,\text{and }g(y)>0$.
\end{lem}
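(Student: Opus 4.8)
The plan is to recognize $h$ as the composition of the scalar ``softplus'' map $\ell(t)\triangleq\log(1+e^{t})$ with the convex function $\hat h$, and then to invoke the standard rule that a convex and nondecreasing function composed with a convex function is convex. First I would use the identity $f(x)g(y)=\exp\!\bigl(\log f(x)+\log g(y)\bigr)=\exp\!\bigl(\hat h(x,y)\bigr)$, which is valid precisely because $f(x)>0$ and $g(y)>0$ guarantee that $\hat h$ is a finite real-valued function; this lets me write
\[
h(x,y)=\log\!\bigl(1+e^{\hat h(x,y)}\bigr)=\ell\!\bigl(\hat h(x,y)\bigr).
\]

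Next I would verify the two properties of $\ell$ required by the composition rule. Differentiating, $\ell'(t)=\dfrac{e^{t}}{1+e^{t}}=\dfrac{1}{1+e^{-t}}\in(0,1)$ for every $t\in\mathbb{R}$, so $\ell$ is (strictly) increasing; and $\ell''(t)=\dfrac{e^{t}}{(1+e^{t})^{2}}>0$, so $\ell$ is convex. Hence $\ell$ is convex and nondecreasing on all of $\mathbb{R}$.

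Finally, I would invoke the composition theorem for convexity: if the inner function is convex and the outer function is convex and nondecreasing, then the composition is convex. Applying this with inner function $\hat h$ (convex by hypothesis) and outer function $\ell$ (just shown to be convex and nondecreasing) yields that $h=\ell\circ\hat h$ is convex, which is the assertion. A self-contained alternative that avoids quoting the composition theorem would be to apply Jensen's inequality to $\ell$ along an arbitrary segment, using convexity of $\hat h$ in the first step and monotonicity of $\ell$ in the second, which reproduces the convexity inequality for $h$ directly.

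There is essentially no hard obstacle here; the only points needing a moment's care are (i) justifying the rewriting $fg=e^{\hat h}$, which is exactly where the hypotheses $f>0$ and $g>0$ enter, as they prevent $\hat h$ from taking the value $-\infty$, and (ii) stating the composition rule in the correct form, namely that it is the \emph{nondecreasing} monotonicity of the outer map $\ell$, combined with convexity of the inner map $\hat h$, that makes the argument valid. Both are routine, so the proof is short.
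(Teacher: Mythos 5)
Your proof is correct, and it takes a genuinely different and arguably cleaner route than the paper's. The paper proves the lemma by brute force: it writes out the $2\times 2$ Hessian $\nabla^{2}h$ explicitly, splits it as a sum ${\bf M}_{1}+{\bf M}_{2}$, and verifies $\det(\nabla^{2}h)\ge 0$ and $\trace(\nabla^{2}h)\ge 0$ using a determinant expansion identity for $2\times2$ matrices together with the inequalities $f_{xx}f\ge f_{x}^{2}$ and $g_{yy}g\ge g_{y}^{2}$ extracted from the (diagonal, separable) Hessian of $\hat h$. That argument implicitly assumes $f$ and $g$ are twice differentiable and exploits the specific product structure $\hat h=\log f+\log g$; it is also tied to the two-variable setting through the det-and-trace PSD criterion. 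Your argument --- writing $h=\ell\circ\hat h$ with $\ell(t)=\log(1+e^{t})$ and invoking the composition rule for a convex nondecreasing outer function with a convex inner function --- uses only the stated hypothesis that $\hat h$ is convex and that $f,g>0$ (so that $fg=e^{\hat h}$ is legitimate). It therefore needs no differentiability, no separability of $\hat h$, and no restriction to two variables, and it is shorter. The one thing the paper's computation buys that yours does not is the explicit Hessian machinery ($q_{xx}q-q_{x}^{2}\ge0$, the $2\times2$ determinant bounds, etc.) that is then reused almost verbatim in the proof of Proposition~\ref{lem:Proposition1}; your approach would require a separate (though analogous) composition argument there. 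As a proof of Lemma~\ref{lem:lem1} itself, your version is complete and preferable.
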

\begin{proof}
See Appendix \ref{subsec:AppendA}.
\end{proof}
\begin{prop}
For any constant $\tau,\lambda>0$, the function $c(x,y,d)\triangleq\log\left(1+\left[(\frac{1}{1+x})(\frac{1}{y})+\tau\right]\frac{1}{d^{\lambda}}\right)$
is convex.\label{lem:Proposition1}
\end{prop}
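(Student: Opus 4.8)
The plan is to reduce Proposition~\ref{lem:Proposition1} to Lemma~\ref{lem:lem1} together with the elementary fact that a sum of log-convex functions is log-convex. First I would set $f(x,y)\triangleq\frac{1}{(1+x)\,y}+\tau$ and $g(d)\triangleq\frac{1}{d^{\lambda}}$, so that $c(x,y,d)=\log\!\big(1+f(x,y)\,g(d)\big)$, with $f,g>0$ on the relevant domain ($1+x>0$, $y>0$, $d>0$, which is the case in our setting). Since Lemma~\ref{lem:lem1} reduces the convexity of $\log(1+fg)$ to that of $\log(fg)$ --- and its one-line composition proof is insensitive to whether the arguments are scalars or vectors, so the pair $(x,y)$ may play the role of the first variable and $d$ that of the second --- it remains to show that $\log(fg)=\log f(x,y)-\lambda\log d$ is jointly convex.

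Next I would split this sum. The term $-\lambda\log d$ is convex in $d$ because $-\log(\cdot)$ is convex and $\lambda>0$, so the only real content is convexity of $\log f(x,y)=\log\!\big(\frac{1}{(1+x)y}+\tau\big)$ in $(x,y)$. Here I would observe that $\frac{1}{1+x}$ and $\frac{1}{y}$ are each log-convex (their logarithms, $-\log(1+x)$ and $-\log y$, are convex), hence so is their product $\frac{1}{(1+x)y}$; the positive constant $\tau$ is trivially log-convex; and a sum of log-convex functions is log-convex, so $f$ is log-convex, i.e.\ $\log f$ is convex. Equivalently, writing $w(x,y)\triangleq-\log(1+x)-\log y$ (a convex function), one has $\log f=\log\tau+\log\!\big(1+e^{\,w-\log\tau}\big)$, a composition of the convex nondecreasing map $s\mapsto\log(1+e^{s})$ with a convex argument. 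Combining the two parts shows $\log(fg)$ is jointly convex, and Lemma~\ref{lem:lem1} then delivers the claim.

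I do not anticipate a genuine obstacle; the only point needing a little care is the mild generalization of Lemma~\ref{lem:lem1} to a vector-valued first argument, which I would either flag explicitly (its proof is dimension-agnostic) or bypass by invoking directly the single fact underlying it: $c=\phi(\log(fg))$ with $\phi(t)\triangleq\log(1+e^{t})$ convex and nondecreasing, so $c$ is convex as soon as $\log(fg)$ is. As a remark I would mention an alternative route that avoids Lemma~\ref{lem:lem1} altogether: write $c=\log\!\big(e^{0}+e^{s_{1}}+e^{s_{2}}\big)$ with $s_{1}=-\log(1+x)-\log y-\lambda\log d$ and $s_{2}=\log\tau-\lambda\log d$ both convex, and use the convexity and coordinatewise monotonicity of the log-sum-exp function; I would still present the Lemma-based argument as the main proof since it matches the flow of the paper.
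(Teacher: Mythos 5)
Your proof is correct, but it takes a genuinely different route from the paper's. The paper proceeds by brute force: it sets $q(x,y)=\frac{1}{(1+x)y}+\tau$ and $h(d)=d^{-\lambda}$, computes the full $3\times3$ Hessian of $c=\log(1+qh)$, splits it as $\frac{1}{(1+qh)^{2}}\left({\bf P}+{\bf Q}\right)$ with ${\bf P}=(qh)^{2}\nabla^{2}\log(qh)$, and then verifies positive definiteness of ${\bf Q}$ via Sylvester's criterion on its leading principal minors --- in effect re-deriving a three-variable analogue of Lemma \ref{lem:lem1} by hand. You instead isolate the dimension-free fact underlying Lemma \ref{lem:lem1}, namely $\log(1+e^{u})=\phi(u)$ with $\phi$ convex and nondecreasing, which reduces everything to the joint convexity of $\log(fg)=\log\left(\frac{1}{(1+x)y}+\tau\right)-\lambda\log d$; that in turn follows from closure of log-convexity under products and sums (or, most directly, from your log-sum-exp formulation $c=\log\left(e^{0}+e^{s_{1}}+e^{s_{2}}\right)$ with $s_{1},s_{2}$ convex). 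Your argument is shorter, avoids all determinant bookkeeping, requires no differentiability of the individual pieces, and generalizes to any number of variables; the paper's Hessian computation buys only strictness claims that the proposition does not need. The single point to state carefully is the one you already flag: you are not invoking Lemma \ref{lem:lem1} literally (its statement and $2\times2$ Hessian proof are for scalar arguments) but its one-line softplus-composition generalization, which you justify, so there is no gap.
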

\begin{proof}
See Appendix \ref{subsec:Apx_Prop1}.
\end{proof}
By defining the auxiliary variables $f_{k}[n],\,w_{k}[n]$, $l_{k}[n]$,
and $\theta_{k}[n]$, we can rewrite \eqref{eq:AveOptProb} as follows

\begin{subequations}\label{eq:EqualOptProb}

\noindent 
\begin{align}
\max_{\mathbb{V},\mathcal{X},\mu} & \ \mu\label{eq:EqualOptProb_obj}\\
\text{s.t.}\mbox{\ } & \frac{1}{N_{c}}\sum_{n=1}^{N_{c}}c_{k}^{\text{ }}\left(f_{k}[n],w_{k}[n],l_{k}[n]\right)\ge\mu\ ,\,\forall k,\label{eq:EqualOptProb_UserCap}\\
 & w_{k}[n]=\left(\left(z^{2}+l_{k}[n]\right)^{(A-1)\alpha_{\ac{los}}/2}-B\right)^{-1}\ ,\,\forall k,\,\forall n,\label{eq:EqualOptProb_gm}\\
 & f_{k}[n]=\exp\left(-a_{k}\theta_{k}[n]+b_{k}\right)\ ,\,\forall k,\,\forall n,\label{eq:EqualOptProb_F}\\
 & l_{k}[n]=r_{k}^{2}[n]\ ,\,\forall k,\,\forall n,\\
 & \theta_{k}[n]=\arctan\left(z/\sqrt{l_{k}[n]}\right)\ ,\,\forall k,\,\forall n,\label{eq:EqualOptProb_theta-1}\\
 & f_{k}[n],w_{k}[n],l_{k}[n],\theta_{k}[n]\ge0\ ,\,\forall k,\,\forall n,\label{eq:EqualOptProb_Vge0}\\
 & \eqref{eq:GenProb_Dmax},\,\eqref{eq:GenProb_loopconst},
\end{align}

\end{subequations}

\noindent where $\mathbb{V}=\{f_{k}[n],w_{k}[n],l_{k}[n],\theta_{k}[n]\,|\,\forall k,\,\forall n\}$
consists of all the auxiliary variables and
\[
c_{k}^{\text{ }}\left(f_{k}[n],w_{k}[n],l_{k}[n]\right)\triangleq\qquad\qquad\qquad\qquad\qquad\qquad\qquad
\]

\noindent 
\begin{equation}
\log_{2}\left(1+\left(\frac{1}{w_{k}[n](1+f_{k}[n])}+B\right)\frac{P\,\beta_{\ac{los}}}{\sigma^{2}\left(z^{2}+l_{k}[n]\right)^{\alpha_{\ac{nlos}}/2}}\right).\label{eq:CmFmDm}
\end{equation}

\noindent Using Proposition \ref{lem:Proposition1}, it can be easily
seen that (\ref{eq:CmFmDm}) is a convex function of variables $f_{k}[n],w_{k}[n],\text{ and }l_{k}[n]$.
In constraint (\ref{eq:EqualOptProb_gm}), $w_{k}[n]$ can be convex
or concave function depending on the value of $B$. However, in our
case it is always convex since $z\ge h_{min}$, in a realistic scenario
$\left(z^{2}+l_{k}[n]\right)^{(A-1)\alpha_{\ac{los}}/2}\gg B$. Moreover,
all constraints (\ref{eq:EqualOptProb_F}) to (\ref{eq:EqualOptProb_theta-1})
comprise convex functions. In order to solve problem \eqref{eq:EqualOptProb},
we utilize the sequential convex programming technique which solves
instead the local linear approximation of the original problem. To
form the local linear approximation, we use the given variables $\mathcal{X}^{j},z^{j}$
in the $j$-th iteration of the algorithm to convert the above problem
to a standard convex form. For the ease of exposition, we use $c_{k}^{\text{ }}[n]$
instead of $c_{k}^{\text{ }}\left(f_{k}[n],w_{k}[n],l_{k}[n]\right)$.
First, let's start with constraint (\ref{eq:EqualOptProb_UserCap}),
since any convex functions can be lower-bounded by its first order
Taylor expansion, then we can write

\noindent 
\[
\frac{1}{N_{c}}\sum_{n=1}^{N_{c}}q_{k}[n]c_{k}^{\text{ }}[n]\ge\frac{1}{N_{c}}\sum_{n=1}^{N_{c}}q_{k}[n]\tilde{c}_{k}^{\text{ }}[n]\ge\mu_{hp},
\]

\noindent where $\tilde{c}_{k}^{\text{ }}[n]$ is an affine function
and equals to the local first order Taylor expansion of $c_{k}^{\text{ }}[n]$
and $\mu_{hp}$ is a lower bound of $\mu$. Similarly, We can convert
(\ref{eq:EqualOptProb_gm}) to (\ref{eq:EqualOptProb_theta-1}) into
the standard convex form by replacing them with their first order
Taylor expansion. We can approximate problem \eqref{eq:EqualOptProb}
as follows

\begin{subequations}\label{eq:TrjOpt_Linear}

\noindent 
\begin{align}
\max_{\mathbb{V},\mathcal{X},\mu_{hp}} & \ \mu_{hp}\\
\mbox{s.t.}\  & \frac{1}{N_{c}}\sum_{n=1}^{N_{c}}q_{k}[n]\tilde{c}_{k}^{\text{ }}[n]\ge\mu_{hp}\,,\,\forall k,\\
 & f_{k}[n]\ge\widetilde{f}_{k}[n]\ ,\,\forall k,\,\forall n,\label{eq:EqualOptProb_F-1}\\
 & w_{k}[n]\ge\widetilde{w}_{k}[n]\ ,\,\forall k,\,\forall n,\\
 & l_{k}[n]\ge\widetilde{l}_{k}[n]\ ,\,\forall k,\,\forall n,\label{eq:EqualOptProb_Dm-1}\\
 & \theta_{k}[n]\ge\widetilde{\theta}_{k}[n]\ ,\,\forall k,\,\forall n,\label{eq:EqualOptProb_rx-1}\\
 & \eqref{eq:EqualOptProb_Vge0},\,\eqref{eq:GenProb_Dmax},\,\eqref{eq:GenProb_loopconst},
\end{align}

\end{subequations}

\noindent where the superscript `` \textasciitilde{} '' denotes
the local first order Taylor expansion. Now, we have a standard convex
problem which can be solved by any convex optimization tools like
CVX\footnote{\noindent Note that to minimize the approximation error, a tight local
Taylor approximation is needed.}. We denote the generated trajectory by solving \eqref{eq:TrjOpt_Linear}
as $\mathcal{X}^{j+1}$.

\subsection{Optimal UAV Altitude\label{subsec:Opt_alt}}

\noindent Now we proceed to optimize the UAV altitude for a given
horizontal UAV trajectory $\mathcal{X}$ and scheduling decision $Q$.
Similar to the preceding section, first we introduce auxiliary variables
$h,\,m_{k}[n],\text{ and }o_{k}[n]$ consisting of convex functions
as follows
\begin{align*}
m_{k}[n] & =\exp\left(-a_{k}\arctan\left(z/r_{k}[n]\right)+b_{k}\right)\ ,\,\forall k,\,\forall n,\\
o_{k}[n] & =\left(\left(h+r_{k}^{2}[n]\right)^{(A-1)\alpha_{\ac{los}}/2}-B\right)^{-1}\ ,\,\forall k,\,\forall n,\\
h & =z^{2}.
\end{align*}

\noindent In a similar manner to section \ref{subsec:PathPlanningSec},
we find the UAV altitude by using the sequential convex programming
with given local point $z^{j}$ in the $j$-th iteration and the generated
horizontal trajectory $\mathcal{X}^{j+1}$ in the last section. Finally,
the UAV altitude is optimized as follows

\begin{subequations}\label{eq:AltOpt-Linear}

\noindent 
\begin{align}
\max_{\mathbb{W},z,\,\mu_{alt}}\  & \mu_{alt}\\
\mbox{s.t.}\  & \frac{1}{N_{c}}\sum_{n=1}^{N_{c}}q_{k}[n]\,\tilde{c}_{k}[n]\ge\mu_{alt}\ ,\,\forall k,\label{eq:EqualOptProb_UserCap-1-1}\\
\quad\  & m_{k}[n]\ge\widetilde{m}_{k}[n]\ ,\,\forall k,\,\forall n,\\
 & o_{k}[n]\ge\widetilde{o}_{k}[n]\ ,\,\forall k,\,\forall n,\\
 & h\ge\widetilde{h},\label{eq:EqualOptProb_ry-2-1}\\
 & m_{k}[n],o_{k}[n],\,h>0,\,\forall k,\,\forall n,\\
 & \eqref{eq:Alt_const},
\end{align}

\end{subequations}

\noindent where $\tilde{c}_{k}[n]$ is the first order Taylor expansion
of $c_{k}\left(m_{k}[n],o_{k}[n],h\right)$ which is a convex function
and is defined similar to (\ref{eq:CmFmDm}), and $\mathbb{W}=\left\{ m_{k}[n],o_{k}[n],h\,|\,\forall k,\,\forall n\right\} $
comprises all the auxiliary variables. The superscript `` \textasciitilde{}
'' denotes the local first order Taylor expansion, and $\mu_{alt}$
is a lower bound of $\mu$. We denote the drone altitude which is
obtained by solving \eqref{eq:AltOpt-Linear} as $z^{j+1}$ to be
used in the next iteration.

\subsection{Iterative Algorithm}

According to the preceding analysis, now we propose an iterative algorithm
to solve the original optimization problem \eqref{eq:GeneralOptProb}
by applying the block-coordinate descent method\cite{XuYin}. As mentioned
earlier, we split up our problem into three phases (or blocks) of
ground node scheduling, drone horizontal trajectory design, and flying
altitude optimization over variables $\left\{ Q,\mathcal{X},z\right\} $.
In each iteration, we update just one set of variables at a time,
rather than updating all the variables together, by fixing the other
two sets of variables. Then, the output of each phase is used as an
input for the next step. The rigorous description of this algorithm
is summarized in Algorithm \ref{alg:ItrAlg}.

\begin{algorithm}
\begin{enumerate}
\begin{spacing}{0.65}
\item Initialize all variables $\left\{ Q^{j},\mathcal{\mathcal{X}}^{j},z^{j}\right\} \,,j=1$.
\end{spacing}
\begin{spacing}{0.6}
\item Find the optimal solution of the scheduling problem \eqref{eq:ScheProb}
for given $\left\{ \mathcal{\mathcal{X}}^{j},z^{j}\right\} $. Denote
the optimal solution as $Q^{j+1}$.
\end{spacing}
\begin{spacing}{0.65}
\item Generate the optimal communication trajectory in horizontal plane
($\mathcal{\mathcal{X}}^{j+1}$) by solving \eqref{eq:TrjOpt_Linear}
with given variables $\left\{ Q^{j+1},\mathcal{\mathcal{X}}^{j},z^{j}\right\} $. 
\end{spacing}
\begin{spacing}{0.6}
\item Solving optimization problem \eqref{eq:AltOpt-Linear} given variables
$\left\{ Q^{j+1},\mathcal{\mathcal{X}}^{j+1},z^{j}\right\} $ and
denote the solution as $z^{j+1}$.
\end{spacing}
\begin{singlespace}
\item Update $j:=j+1$.
\item Go to step 2 and repeat until the convergence (i.e. until observing
a small increase in objective value).
\end{singlespace}
\end{enumerate}
\caption{Iterative algorithm for solving optimization problem \eqref{eq:GeneralOptProb}.\label{alg:ItrAlg}}
\end{algorithm}

\subsection{Proof of Convergence }

\noindent In this section we prove the convergence of Algorithm \ref{alg:ItrAlg}
in a similar manner of \cite{WuQiYoRu}. To this end, in the $j$-th
iteration we denote the $\mu(Q^{j},\mathcal{\mathcal{X}}^{j},z^{j}),\,\mu_{hp}(Q^{j},\mathcal{\mathcal{X}}^{j},z^{j}),\,\mu_{alt}(Q^{j},\mathcal{\mathcal{X}}^{j},z^{j})$
as the optimal objective values of problems \eqref{eq:ScheProb},
\eqref{eq:TrjOpt_Linear}, and \eqref{eq:AltOpt-Linear} , respectively.
From step $(2)$ of Algorithm \ref{alg:ItrAlg} for the given solution
$Q^{j+1}$, we have 
\[
\mu(Q^{j},\mathcal{\mathcal{X}}^{j},z^{j})\le\mu(Q^{j+1},\mathcal{\mathcal{X}}^{j},z^{j}),
\]

\noindent since the optimal solution of problem \eqref{eq:ScheProb}
is obtained. Moreover, we can write
\begin{align*}
\mu(Q^{j+1},\mathcal{\mathcal{X}}^{j},z^{j}) & \overset{(a)}{=}\mu_{hp}(Q^{j+1},\mathcal{\mathcal{X}}^{j},z^{j})\\
 & \overset{(b)}{\le}\mu_{hp}(Q^{j+1},\mathcal{\mathcal{X}}^{j+1},z^{j})\\
 & \overset{(c)}{\le}\mu(Q^{j+1},\mathcal{\mathcal{X}}^{j+1},z^{j}).
\end{align*}

\noindent Step $(a)$ holds due to $\mu_{hp}(Q^{j+1},\mathcal{\mathcal{X}}^{j},z^{j})$
being a tight local first order Taylor approximation of problem \eqref{eq:EqualOptProb}
at the local points. Step $(b)$ is true, since we can find the optimal
solution of problem \eqref{eq:TrjOpt_Linear} with the given variables
$\{Q^{j+1},\mathcal{\mathcal{X}}^{j},z^{j}\}$, and $(c)$ holds because
$\mu_{hp}(Q^{j+1},\mathcal{\mathcal{X}}^{j+1},z^{j})$ is the lower
bound of the objective value $\mu(Q^{j+1},\mathcal{\mathcal{X}}^{j+1},z^{j})$.
Then, by proceeding to step $(4)$ of Algorithm \ref{alg:ItrAlg}
and given variables $\{Q^{j+1},\mathcal{\mathcal{X}}^{j+1},z^{j}\}$,
we obtain
\begin{align*}
\mu(Q^{j+1},\mathcal{\mathcal{X}}^{j+1},z^{j}) & \overset{(d)}{=}\mu_{alt}(Q^{j+1},\mathcal{\mathcal{X}}^{j+1},z^{j})\\
 & \overset{(e)}{\le}\mu_{alt}(Q^{j+1},\mathcal{\mathcal{X}}^{j+1},z^{j+1})\\
 & \overset{(f)}{\le}\mu(Q^{j+1},\mathcal{\mathcal{X}}^{j+1},z^{j+1}).
\end{align*}

\noindent Step $(d)$ is true since the local first order Taylor approximation
in \eqref{eq:AltOpt-Linear} is tight for the given local variables
$\{Q^{j+1},\mathcal{\mathcal{X}}^{j+1},z^{j}\}$. $(e)$ holds since,
the optimization problem \eqref{eq:AltOpt-Linear} can be optimally
solved, and $(f)$ is true due to $\mu_{alt}(Q^{j+1},\mathcal{\mathcal{X}}^{j+1},z^{j+1})$
is a lower bound of the objective value $\mu(Q^{j+1},\mathcal{\mathcal{X}}^{j+1},z^{j+1})$.
Finally, we have
\[
\mu(Q^{j},\mathcal{\mathcal{X}}^{j},z^{j})\le\mu(Q^{j+1},\mathcal{\mathcal{X}}^{j+1},z^{j+1}).
\]

\noindent Which indicates that the objective value of Algorithm \ref{alg:ItrAlg}
after each iteration is non-decreasing and since it is upper bounded
by a finite value, so the convergence of Algorithm \ref{alg:ItrAlg}
is guaranteed.

\subsection{Trajectory Initializing\label{subsec:TrjIniti}}

In this section, we propose a simple strategy to initialize the drone
trajectory to be optimized later on by the introduced Algorithm \ref{alg:ItrAlg}.
The initial trajectory is in form of a circle which is centered at
${\bf c}_{\text{trj}}=\left(x_{\text{trj}},y_{\text{trj}}\right)$
and the radius $r_{\text{trj}}$ which is given by

\noindent 
\[
r_{\text{trj}}=\frac{L_{max}}{2\pi},
\]

\noindent where $L_{max}=T_{c}\cdot v_{max}$. To determine the ${\bf c}_{\text{trj}}$,
we use the notion of the (weighted) center of gravity of the ground
nodes \cite{EsrGesbertWSA}. Moreover, the flying altitude is initialized
at $h_{max}$.

\section{Numerical Results\label{sec:Simulation}}

We consider a dense urban Manhattan-like area of size $600\times600$
square meters, consisting of a regular street and buildings. The height
of the building is Rayleigh distributed within the range of $5$ to
$40$ meters \cite{AlhKaLa}. The average building height is $14$
m. True propagation parameters are chosen as  $\alpha_{\ac{los}}=2.27,\,\alpha_{\ac{nlos}}=3.64,\,\ss_{\ac{los}}=-30\,\text{dB},\,\ss_{\ac{nlos}}=-40\,\text{dB}$
according to an urban micro scenario in \cite{3GPP}. The variances
of the shadowing component in \ac{los} and \ac{nlos} scenarios are
$\sigma_{\ac{los}}^{2}=2$ and $\sigma_{\ac{nlos}}^{2}=5$, respectively.
The transmission power for ground nodes is chosen as $P=30$ dBm,
and the noise power at the receiver is -80 dBm. The UAV has a maximum
speed of $v_{max}=10\,\text{m}/\text{s}$.

\subsection{Learning Trajectory }

\begin{figure}[t]
\begin{centering}
\begin{tabular}{>{\centering}p{0.7cm}c}
(a) & \tabularnewline
 & \includegraphics[width=7.5cm]{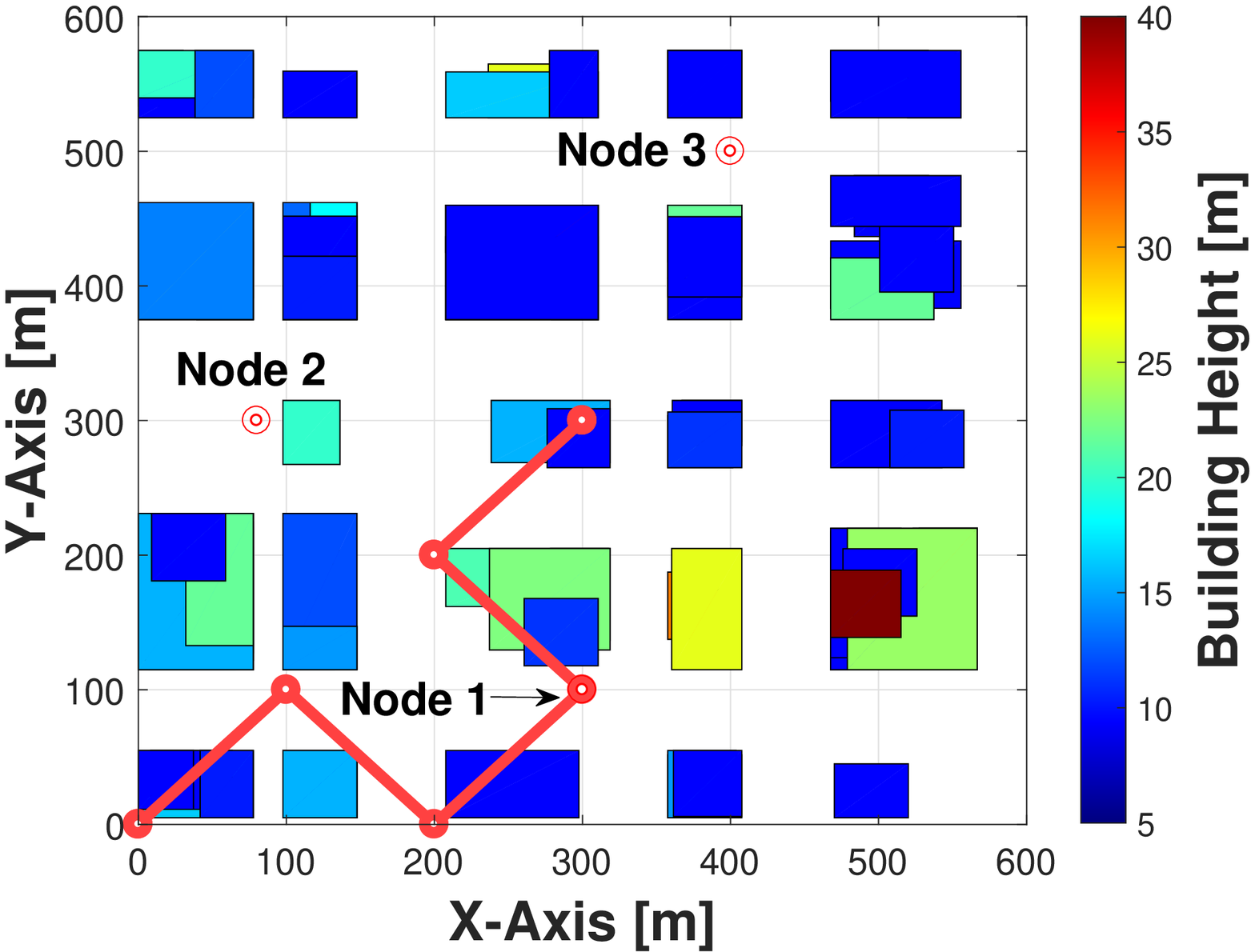}\tabularnewline
(b) & \tabularnewline
 & \includegraphics[width=7.5cm]{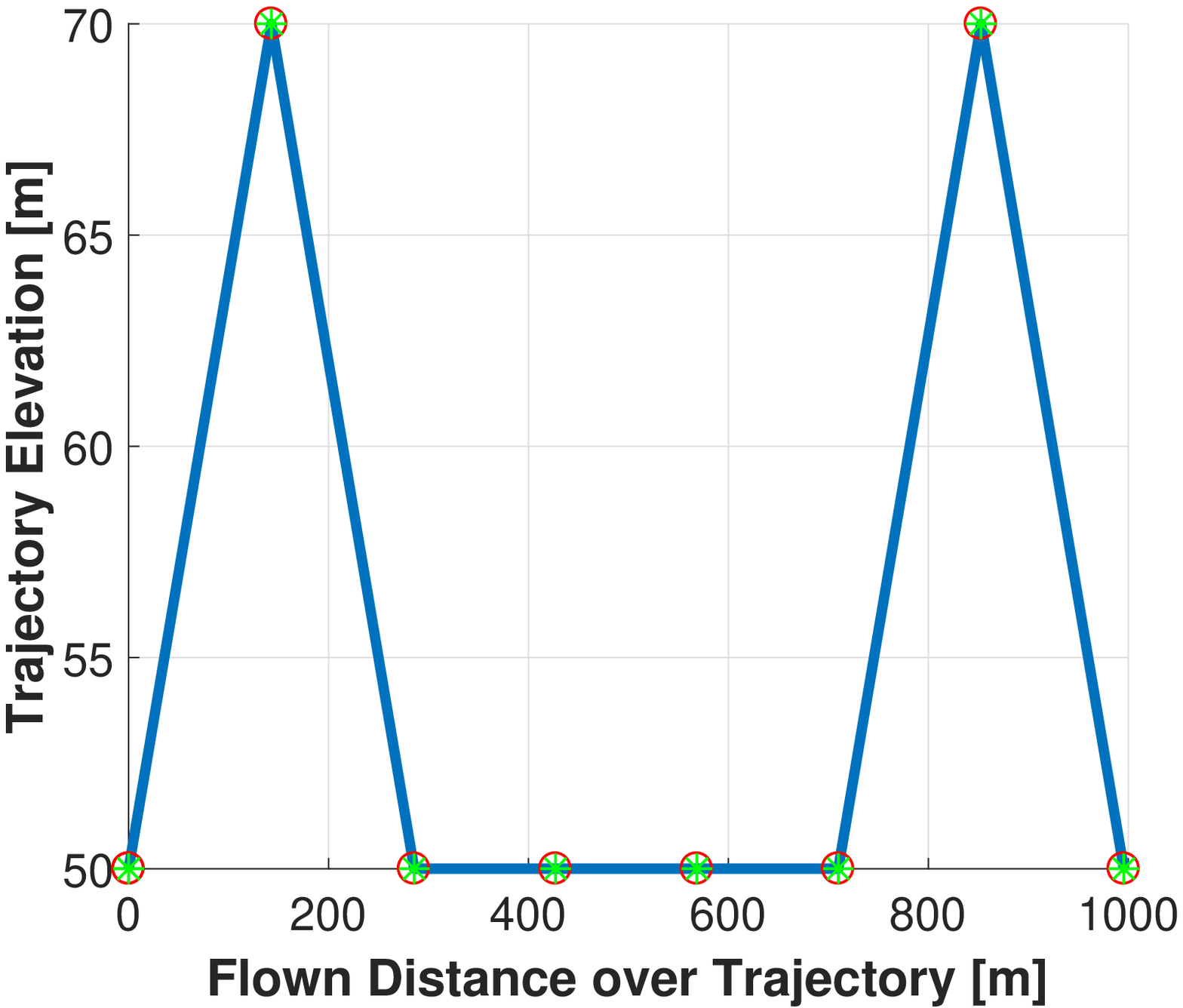}\tabularnewline
\end{tabular}
\par\end{centering}
\caption{(a) Top view of the optimal learning trajectory using proposed algorithm.
(b) The UAV elevation along the trajectory. \label{fig:LearningOptPath}}
\end{figure}

An illustration of the optimal learning trajectory is presented in
Fig. \ref{fig:LearningOptPath} for $K=3$ ground nodes. In this scenario,
the UAV flies from the base position ${\bf x}_{b}(0,0,50)$ towards
the terminal location ${\bf x}_{t}=(300,300,50)$ under the flight
time constraint $T_{l}=100\,\text{s}$. To discretize the search space
over the city for creating the 3D path graph, we chose $a_{h}=100$
m and $a_{v}=20$ m as defined in section \ref{subsec:DiscreteApproxAndDP}.
It is interesting to note that, the trajectory experiences a wide
array of altitudes there by improving the learning performance of
both \ac{los} and \ac{nlos} segments. For the ease of exhibition,
we plotted the generated trajectory in two different figures. Fig.
\ref{fig:LearningOptPath}-a shows the top view of the generated trajectory
while the elevation of the trajectory as a function of the flown distance
is depicted in Fig. \ref{fig:LearningOptPath}-b.

In Fig. \ref{fig:LearningPathPerf} the performance of the optimal
trajectory in terms of the mean squared error (MSE) of the learned
channel parameters $(\alpha_{s},\ss_{s};s=\left\{ \ac{los},\ac{nlos}\right\} )$
is shown as a function of the number of ground nodes. The duration
of the learning phase $T_{l}=100$ s. We perform Monte-Carlo simulations
over random user locations. We also compare the performance of the
optimal trajectory with that of randomly generated arbitrary trajectories.
For a given realization, arbitrary trajectory of duration $T_{l}$
is generated. It is clear that, the channel can be learned more precisely
by taking the optimized learning trajectory. Also, the learning error
is reduced when the number of ground nodes increases, since the chance
of obtaining measurement from both \ac{los} and \ac{nlos} segments
increases.

\subsection{Communication Trajectory\label{subsec:Sim_Comm_TRJ}}

\begin{figure}
\begin{centering}
\includegraphics[width=9cm]{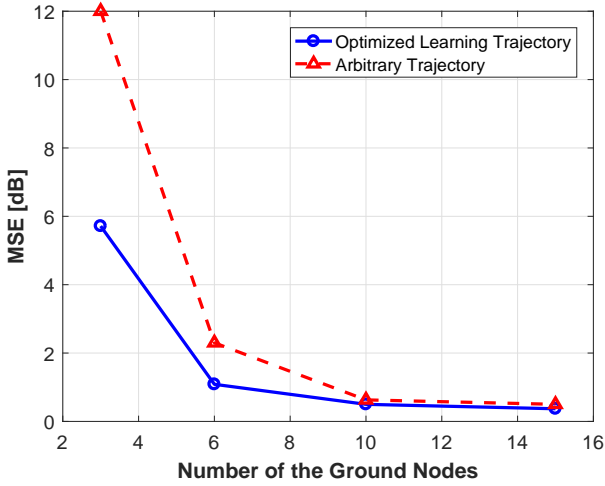}
\par\end{centering}
\caption{Comparison of the MSE for different learning trajectories.\label{fig:LearningPathPerf} }
\end{figure}

\begin{figure}
\begin{centering}
\includegraphics[width=9cm]{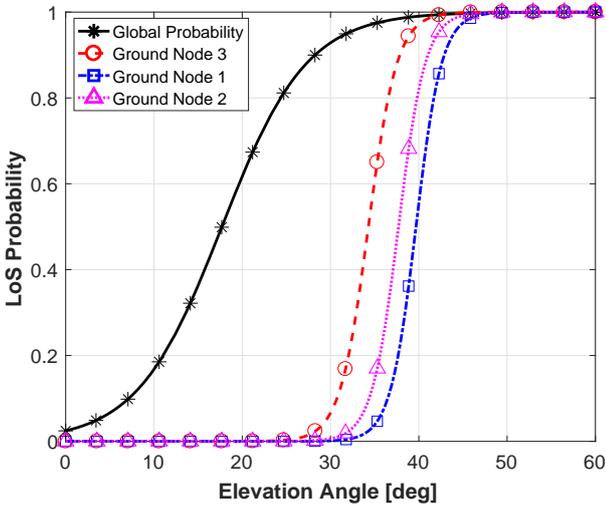}
\par\end{centering}
\caption{Global LoS probability compared with the local LoS probability learned
form the 3D map for three ground nodes. \label{fig:NodeLoSProb}}

\end{figure}

\begin{figure}
\begin{centering}
\includegraphics[width=9.1cm]{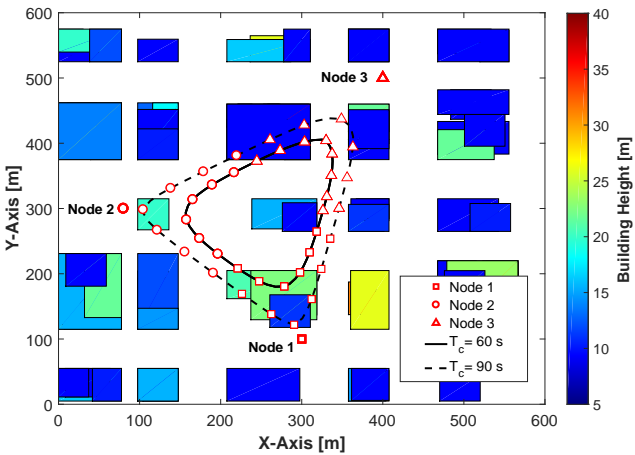}
\par\end{centering}
\caption{Optimal drone trajectory and ground node scheduling for different
flight times. As the flight time increases, the UAV gets closer to
individual ground nodes.\label{fig:DiffTrj}}
\end{figure}

In this section, we evaluate the performance of the communication
path planning algorithm. Since the communication trajectory design
depends on the local \ac{los} probability model, we first need to
learn the probability model coefficients in (\ref{eq:LoS_Prob}).
For this, we employed the logistic regression method on the training
data set obtained by randomly sampling around each ground node. The
labeling is done with the true \ac{los} status obtained from the
3D city map. Fig.\ref{fig:NodeLoSProb} shows the obtained \ac{los}
probabilities for $K=3$ ground nodes whose locations are shown in
Fig. \ref{fig:DiffTrj}. We also plot the global \ac{los} probability
which is computed from the characteristics of the 3D map according
to \cite{AlhKaLa}. It is clear that the local probabilities have
the sharper transitions and thus provide more information per ground
node (i.e. if the node is surrounded by the tall buildings or is in
a large open area), while the global probability can be considered
as the average of the local \ac{los} probability of the nodes in
different locations.

In Fig. \ref{fig:DiffTrj}, we show the generated trajectory over
the city buildings for different flight times $(T_{c})$. It is clear
that by increasing $T_{c}$, the UAV exploits the flight time to improve
the ground node link quality by enlarging the trajectory and moving
towards the ground nodes. It is crucial to note that the generated
trajectory is closer to the ground nodes which has the less \ac{los}
probability (i.e. the ground nodes who are close to buildings or surrounded
by tall skyscrapers). In Fig. \ref{fig:DiffTrj}, the drone tries
to get closer to the ground nodes 1 and 2 since they are close to
the buildings which mostly block the \ac{los} link to the drone.
Moreover, we illustrated the result of the ground node scheduling
during the trajectory with different markers which are assigned to
each node. Namely triangles, squares, and circles pertain, respectively,
to ground nodes 1 to 3. For example, square markers shown on the trajectory
indicate that the drone is serving the ground node 1 at that time.

We then outline the convergence behavior of Algorithm \ref{alg:ItrAlg}
by assuming $K=3$ and $T_{c}=90$ s. The drone altitude and worst
ground node throughput versus iteration are shown in Fig.\ref{fig:AlgEvolution}.
As we expected, the worst ground node throughput in each iteration
improves and finally converges to a finite value.

\begin{figure}
\begin{centering}
\begin{tabular}{cc}
\includegraphics[width=4.3cm]{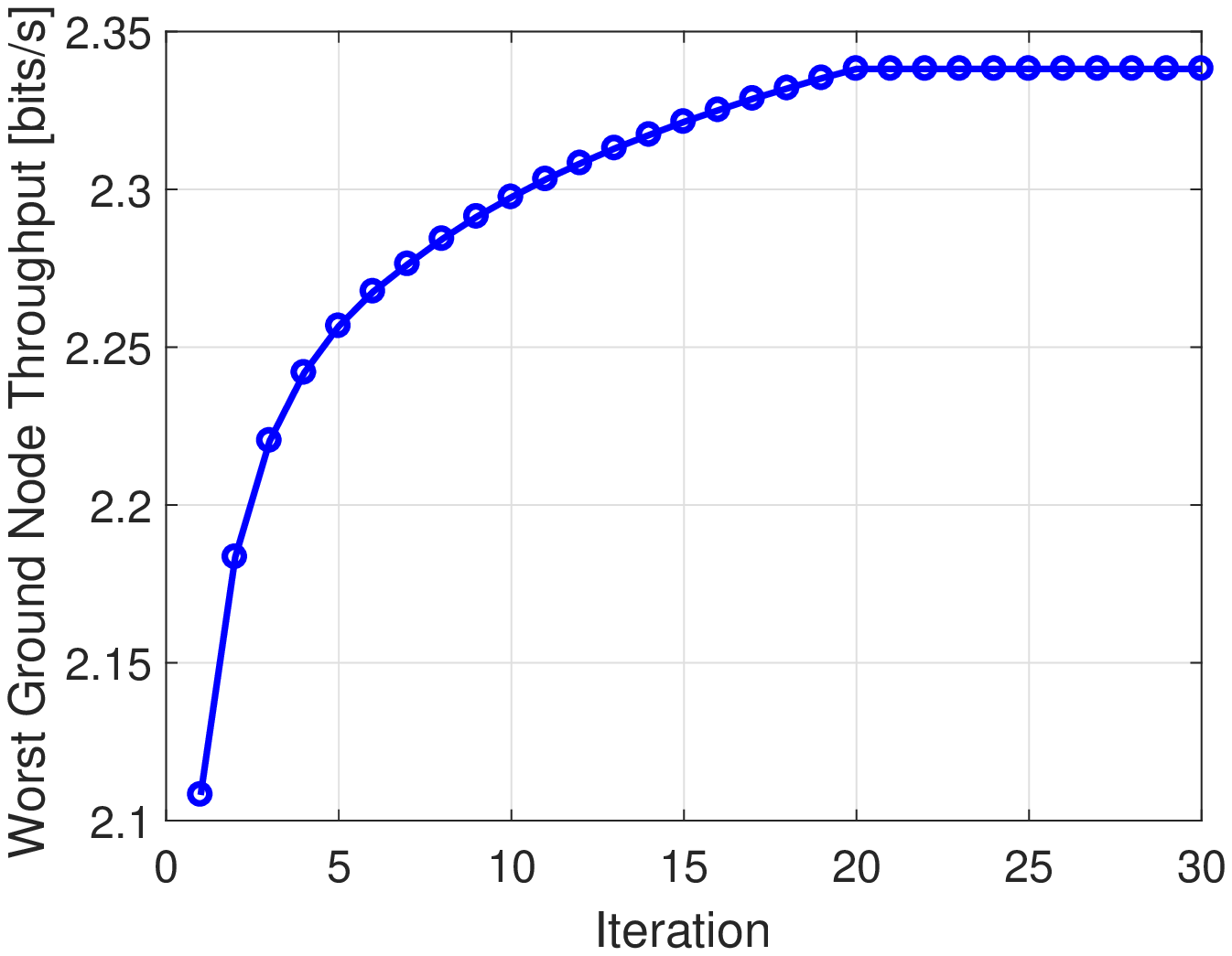} & \includegraphics[width=4.2cm]{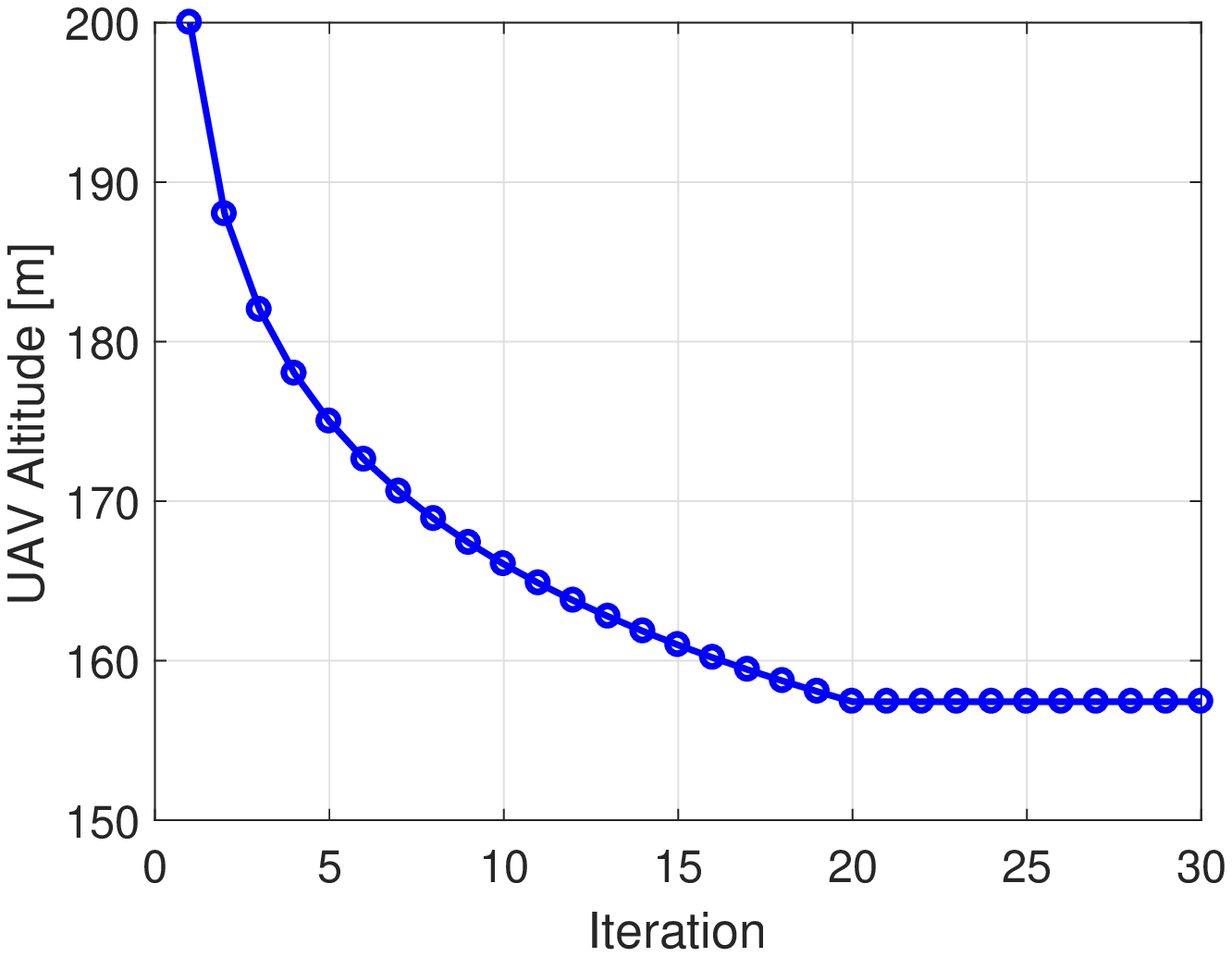}\tabularnewline
{\small{}(a)} & {\small{}(b)}\tabularnewline
\end{tabular}
\par\end{centering}
\caption{(a) Throughput performance versus iteration and (b) drone altitude
evolution versus iteration.\label{fig:AlgEvolution}}
\end{figure}

In Fig. \ref{fig:Comparison-1}, the performance of the proposed map-based
algorithm in comparison with two other approaches, which are briefly
explained below, versus the flight time by considering $K=6$ ground
nodes is shown. 
\begin{itemize}
\item \textbf{Probabilistic algorithm }

In the probabilistic approach, we consider the same trajectory design
algorithm as proposed in this paper with the difference that for a
link between the drone located at altitude $z$ and the $k$-th ground
node, the \ac{los} probability at the time step $n$ is given by

\noindent 
\[
p_{k}[n]=\frac{1}{1+\exp\left(-a\,\theta_{k}[n]+b\right)},
\]

where parameters $\left\{ a,b\right\} $ are computed according to
\cite{AlhKaLa} and based on the characteristics of the 3D map. In
other words, we use a global \ac{los} probability model.
\item \textbf{Deterministic} \textbf{algorithm}

In the deterministic algorithm, an optimal trajectory is generated
based on the method introduced in \cite{WuQiYoRu} which considers
a single deterministic \ac{los} channel model for the link between
the drone and the ground nodes. In order to have a fair comparison,
we modified this method by using an average path-loss instead of the
pure \ac{los} channel model. the channel parameters pertaining to
the average path-loss model are learned by fitting one channel model
for the whole measurements gathered from both \ac{los} and \ac{nlos}
ground nodes.
\end{itemize}
We have also investigated the impact of the imperfect estimation of
the channel parameters on the performance of the map-based algorithm.
The result of the algorithm using the learned channel parameters is
illustrated in Fig. \ref{fig:Comparison-1}. As it can be seen, the
channel estimation uncertainty stemming from the learning part has
a minor effect on the performance of the map-based algorithm and in
general the map-based algorithm outperforms the other approaches which
is expected since in the proposed algorithm we utilize more information
through the 3D map.

\begin{figure}
\begin{centering}
\includegraphics[width=9.5cm]{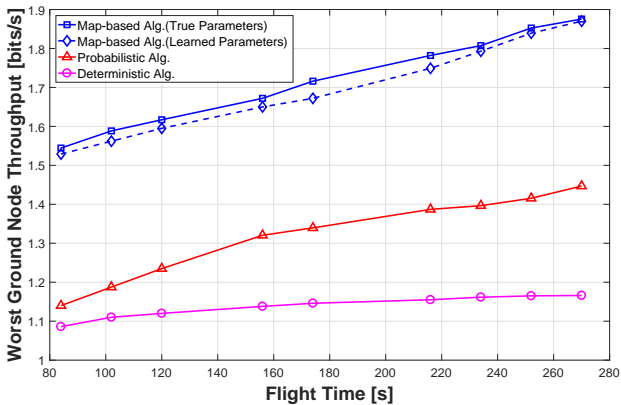}
\par\end{centering}
\caption{Performance of the map-based algorithm for the learned and true channel
parameters in comparison with the probabilistic approach and the deterministic
Algorithm for 6 ground nodes versus increasing the flight time. \label{fig:Comparison-1}}
\end{figure}

\begin{figure}
\begin{centering}
\includegraphics[width=9cm]{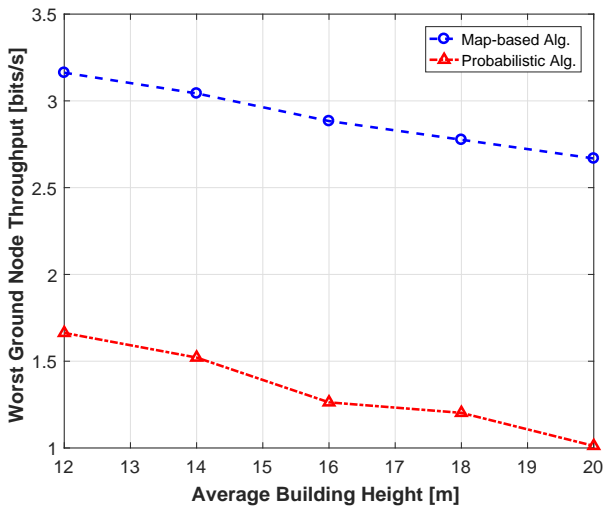}
\par\end{centering}
\caption{Performance comparison of the map-based Algorithm and the probabilistic
approach versus the average building height for a fixed flight time.\label{fig:PerVSBldHeight} }

\end{figure}

In Fig. \ref{fig:PerVSBldHeight} a performance comparison of the
proposed map-based algorithm and the probabilistic approach under
a fixed flight time by increasing the average buildings height is
illustrated. The map-based algorithm provides a better services to
the ground nodes. For both algorithms by increasing the buildings
height, the performance degraded since it is more likely that the
link between the ground nodes and the drone over the course of the
trajectory being \ac{nlos}.

\section{Conclusion\label{sec:Conclusion}}

This work has considered the problem of trajectory design for an UAV
BS that is providing communication services for a number of ground
nodes in the context of an IoT data harvesting scenario. In contrast
to the existing literature, we assume that the propagation parameters
are unknown and we devise an optimized trajectory for the UAV that
allows it to learn the parameters efficiently. The learning trajectory
optimization relies on the dynamic programming techniques and the
knowledge of the 3D city map. Once the channel parameters are learned,
we have developed throughput-optimized trajectories such that the
amount of data collected from each of the ground nodes is maximized.
We have proposed an iterative algorithm that leverages the knowledge
of the 3D city map via a novel map-compression method and uses the
block coordinate descent and sequential convex programming techniques.
It is also shown that the proposed algorithm is guaranteed to converge
to at least a locally optimal solution. The advantages of the learning
trajectory optimization and communication path planning algorithm
by utilizing the proposed map compression method are illustrated in
an urban IoT setting.

\appendix{}

\subsection{Proof of the average channel gain\label{subsec:ProofOfAvgChGain}}

The average channel gain of the link between the drone and the $k$-th
ground node in the $n$-th time slot is given by

\noindent 
\begin{align}
\text{E}[\gamma_{k}[n]] & =p_{k}[n]\gamma_{\ac{los},k}[n]+(1-p_{k}[n])\gamma_{\ac{nlos},k}[n],\label{eq:AvgChGain_Proof}
\end{align}

\noindent where $p_{s,k}[n]$ denotes the \ac{los} probability, $\gamma_{\ac{los},k}[n]$
and $\gamma_{\ac{nlos},k}[n]$, respectively, denote the channel gain
in \ac{los} and \ac{nlos} propagation segments. Expanding (\ref{eq:AvgChGain_Proof})
we have

\noindent 
\begin{align}
\text{E}[\gamma_{k}[n]]\overset{(a)}{=}\, & p_{k}[n]\frac{\beta_{\ac{los}}}{d_{k}^{\alpha_{\ac{los}}}[n]}+(1-p_{k}[n])\frac{\beta_{\ac{nlos}}}{d_{k}^{\alpha_{\ac{nlos}}}[n]}\nonumber \\
\overset{(b)}{=} & \left(\frac{d_{k}^{(A-1)\alpha_{\ac{los}}}-B}{1+\exp\left(-a_{k}\theta_{k}+b_{k}\right)}+B\right)\frac{\beta_{\ac{los}}}{d_{k}^{\alpha_{\ac{nlos}}}},\label{eq:AverageChGain_b}
\end{align}

\noindent where step $(a)$ holds by substituting the values of $\gamma_{\ac{los},k}[n]$
and $\gamma_{\ac{nlos},k}[n]$ form (\ref{eq:CH_Model}) into (\ref{eq:AvgChGain_Proof}),
$(b)$ is obtained by substituing (\ref{eq:LoS_Prob}), where $B=\frac{\beta_{\ac{nlos}}}{\beta_{\ac{los}}},\,A=\frac{\alpha_{\ac{nlos}}}{\alpha_{\ac{los}}}\ge1$,
and $d_{k}[n]=\sqrt{z^{2}+r_{k}^{2}[n]}$ is the distance between
the $k$-th ground node and the drone. Note that, in order to ease
the notation, the average random shadowing is assumed absorbed into
$\beta_{s}$ in (\ref{eq:AverageChGain_b}) i.e., $\beta_{s}\triangleq\beta_{s}exp(\sigma_{s}^{2}/2)$
, $s\in\{\ac{los},\ac{nlos}\}$.

\subsection{Proof of Lemma \ref{lem:lem1}\label{subsec:AppendA}}

By proving that the Hessian of the function $h\triangleq h(x,y)$,
is a positive semi-definite (PSD) matrix, we prove the convexity of
$h$. We start by considering the Hessian of function $\hat{h}\triangleq\hat{h}(x,y)$

\noindent 
\begin{equation}
\nabla^{2}\hat{h}=\left[\begin{array}{cc}
\frac{f_{xx}f-f_{x}^{2}}{f^{2}} & 0\\
0 & \frac{g_{yy}g-g_{y}^{2}}{g^{2}}
\end{array}\right]\ge0,\label{eq:Hess_FG}
\end{equation}

\noindent where $f\triangleq f(x)$, $g\triangleq g(x)$, $f_{x}\triangleq\frac{\partial f}{\partial x}$,
$g_{y}\triangleq\frac{\partial g}{\partial y}$, $f_{xx}\triangleq\frac{\partial^{2}f}{\partial x^{2}}$,
and $g_{yy}\triangleq\frac{\partial^{2}g}{\partial y^{2}}$. Since
$\hat{h}$ is convex, $\nabla^{2}\hat{h}$ is PSD and has non-negative
diagonal elements. Hence, for $f>0$, $g>0$,
\begin{align}
f_{xx}\ge\frac{f_{x}^{2}}{f} & ,g_{yy}\ge\frac{g_{y}^{2}}{g}.\label{eq:Hess_FG_1}
\end{align}

\noindent Also, it can easily be deduced that
\begin{equation}
f_{xx}g\ge0,\,g_{yy}f\ge0.\label{eq:fxxgge0}
\end{equation}

\noindent The hessian of $h\triangleq h(x,y)$ is given by

\noindent 
\[
\nabla^{2}h=\left[\begin{array}{cc}
\frac{g^{2}\left(f_{xx}f-f_{x}^{2}\right)+f_{xx}g}{\left(1+fg\right)^{2}} & \frac{f_{x}g_{y}}{\left(1+fg\right)^{2}}\\
\frac{f_{x}g_{y}}{\left(1+fg\right)^{2}} & \frac{f^{2}\left(g_{yy}g-g_{y}^{2}\right)+g_{yy}f}{\left(1+fg\right)^{2}}
\end{array}\right].
\]

\noindent If $\det(\nabla^{2}h)\ge0$ and $\trace(\nabla^{2}h)\ge0$,
then $\nabla^{2}h$ is PSD \cite{DeKle}. Let us rewrite $\nabla^{2}h$
as a summation of two matrices $\nabla^{2}h={\bf M}_{1}+{\bf M}_{2}$,
where

\noindent 
\begin{align*}
{\bf M}_{1} & =\left[\begin{array}{cc}
\frac{g^{2}\left(f_{xx}f-f_{x}^{2}\right)}{\left(1+fg\right)^{2}} & 0\\
0 & \frac{f^{2}\left(g_{yy}g-g_{y}^{2}\right)}{\left(1+fg\right)^{2}}
\end{array}\right],\\
{\bf M}_{2} & =\left[\begin{array}{cc}
\frac{f_{xx}g}{\left(1+fg\right)^{2}} & \frac{f_{x}g_{y}}{\left(1+fg\right)^{2}}\\
\frac{f_{x}g_{y}}{\left(1+fg\right)^{2}} & \frac{g_{yy}f}{\left(1+fg\right)^{2}}
\end{array}\right].
\end{align*}

\noindent Since $\det(\nabla^{2}h)$ is a $2\times2$ matrix, we can
write it as \cite{PreFriGar},

\noindent 
\[
\det(\nabla^{2}h)=\det({\bf M}_{1})+\det({\bf M}_{2})+\trace({\bf M}_{1}^{\dagger}{\bf M}_{2}),
\]

\noindent where ${\bf M}_{1}^{\dagger}$ is the adjugate of ${\bf M}_{1}$.
From (\ref{eq:Hess_FG}), it can easily be shown that $\det({\bf M}_{1})\ge0$.
Also, using (\ref{eq:Hess_FG_1}) we can see that
\begin{align*}
\det({\bf M}_{2}) & =\left(1+fg\right)^{-2}\left[(f_{xx}f)\,(g_{yy}g)-f_{x}^{2}g_{y}^{2}\right]\\
 & \ge0.
\end{align*}

\noindent Finally, from (\ref{eq:Hess_FG}) and (\ref{eq:fxxgge0}),
we have

\noindent 
\begin{align*}
\trace({\bf M}_{1}^{\dagger}{\bf M}_{2}) & =f^{2}\left(g_{yy}g-g_{y}^{2}\right)f_{xx}g+g^{2}\left(f_{xx}f-f_{x}^{2}\right)g_{yy}f\\
 & \ge0.
\end{align*}

\noindent Therefore, we can conclude that $\det(\nabla^{2}h)\ge0$.
It remains to prove that $\trace\left(\nabla^{2}h\right)\ge0$. Using
(\ref{eq:Hess_FG}) and (\ref{eq:fxxgge0}), we can see that the diagonal
elements of $\nabla^{2}h$ are positive and hence the $\trace\left(\nabla^{2}h\right)\ge0$.
Consequently, we can see $\nabla^{2}h$ is PSD, which concludes the
proof.

\subsection{Proof of Proposition \ref{lem:Proposition1}\label{subsec:Apx_Prop1}}

\noindent Let $f(x)=\frac{1}{1+x}$, $g(y)=\frac{1}{y}$, $h(d)=1/d^{\lambda}$
and $q(x,y)=f(x)g(y)+\tau,\,\tau\geq0$. For positive $f\triangleq f(x)$
and $g\triangleq g(y)$, since $\log(f\,g)$ is strictly convex, using
Lemma \ref{lem:lem1}, we can infer that $\log\left(q(x,y)\right)$
is also strictly convex. Finally, from the above arguments we can
easily see that the function
\[
\hat{c}(x,y,d)=\log\left(q(x,y)\,h(d)\right),\,k\geq0
\]

\noindent is also strictly convex.

The Hessian of $\hat{c}(x,y,d)$ is given by

\noindent 
\begin{equation}
\nabla^{2}\hat{c}=\left[\begin{array}{ccc}
\frac{\left(q_{xx}q-q_{x}^{2}\right)}{q^{2}} & \frac{\left(q_{xy}q-q_{x}q_{y}\right)}{q^{2}} & 0\\
\frac{\left(q_{yx}q-q_{x}q_{y}\right)}{q^{2}} & \frac{\left(q_{yy}q-q_{y}^{2}\right)}{q^{2}} & 0\\
0 & 0 & \frac{\left(h_{dd}h-h_{d}^{2}\right)}{h^{2}}
\end{array}\right],\label{eq:Hess_Log_FH}
\end{equation}

\noindent where $q\triangleq q(x,y)$ and $h\triangleq h(d)$. $q_{x},\,q_{xy}$
stand for the partial derivative of $q$ and are defined as $q_{x}=\frac{\partial q}{\partial x},\,q_{yx}=q_{xy}=\frac{\partial^{2}q}{\partial x\partial y}$.
$q_{xx},q_{yy},h_{d},h_{dd}$ also are defined similarly. Since $\nabla^{2}\hat{c}$
is a positive definite (PD) and symmetric matrix, it has positive
diagonal elements. Hence,
\begin{equation}
q_{xx}>\frac{q_{x}^{2}}{q}>0.\label{eq:q_xx}
\end{equation}

\noindent Since $h>0$, from (\ref{eq:q_xx}) we have

\noindent 
\begin{align}
 & h\,q_{xx}>0.\label{eq:HF_xx}
\end{align}

\noindent Moreover, since $\log\left(f\,g\right)$ is strictly convex,
we can write 

\noindent 
\begin{align}
f_{xx}f>f_{x}^{2},\ g_{yy}g & >g_{y}^{2}.\label{eq:F_Convex}
\end{align}

Using the above results, we now prove that the function

\noindent 
\[
c(x,y,d)=\log\left(1+q(x,y)\,h(d)\right)
\]

\noindent is convex. The Hessian of $c\triangleq c(x,y,d)$ is

\noindent 
\[
\nabla^{2}c=\frac{1}{\left(1+q\,h\right)^{2}}\left({\bf P}+{\bf Q}\right),
\]

\noindent where 
\begin{align*}
{\bf P}= & \left(q\,h\right)^{2}\nabla^{2}\hat{c},\\
{\bf Q}= & \left[\begin{array}{ccc}
q_{xx}h & q_{xy}h & q_{x}h_{d}\\
q_{yx}h & q_{yy}h & q_{y}h_{d}\\
q_{x}h_{d} & q_{y}h_{d} & q\,h_{dd}
\end{array}\right].
\end{align*}

\noindent Matrix ${\bf P}$ is PD since $\nabla^{2}\hat{c}$ is PD
and $q,h>0$. In order to show that the Hessian matrix $\nabla^{2}c$
is PD, we need to prove that ${\bf Q}$ is PD as the sum of two PD
matrices is PD. According to \cite{GilStra}, if all upper left $n\times n$
determinants of a symmetric matrix are positive, the matrix is PD.
Matrix ${\bf Q}$ is symmetric, since $q_{xy}$ and $q_{yx}$ are
equal to $f_{x}\,g_{y}$.

\noindent We start from the upper left $1\times1$ determinants of
${\bf Q}$ which equals to $q_{xx}h$. It follows from (\ref{eq:HF_xx}),
that $q_{xx}h>0$. Now, we proceed to show that the determinant of
upper left $2\times2$ matrix of ${\bf Q}$ is positive. So, we can
write
\begin{align}
\frac{\det\left({\bf Q}_{2\times2}\right)}{h^{2}} & =\left(q_{xx}\,q_{yy}-q_{xy}^{2}\right)\label{eq:Det_2by2}\\
 & \overset{\left(a\right)}{=}(f_{xx}f)(g_{xx}g)-f_{x}^{2}g_{y}^{2}\\
 & \overset{\left(b\right)}{>}0,
\end{align}
 where ${\bf Q}_{2\times2}$ denotes the upper left $2\times2$ matrix
of ${\bf Q}$, $(a)$ is obtained by substituting $q_{xx}=f_{xx}g,\,q_{yy}=g_{yy}f,\:q_{xy}=f_{x}\,g_{y}$
in (\ref{eq:Det_2by2}) and step $(b)$ follows from (\ref{eq:F_Convex}).
Then, we compute 
\begin{align*}
\det\left({\bf Q}\right) & =h_{d}^{2}\left(h\,m\right)+h_{dd}h\left(h\,q\,p\right),
\end{align*}

\noindent where $m=2q_{xy}q_{x}q_{y}-q_{xx}q_{y}^{2}-q_{yy}q_{x}^{2},\,p=q_{xx}q_{yy}-q_{xy}^{2}$.
From (\ref{eq:F_Convex}), it can be shown that $m<0$. From the convexity
of $\hat{c}(x,y,d)$ , by computing the determinant of upper left
$2\times2$ matrix of $\nabla^{2}\hat{c}$ and performing some algebraic
reductions we obtain 
\begin{align}
 & m+q\,p>0\nonumber \\
 & \Rightarrow h\,q\,p>-h\,m>0.\label{eq:Log_FG_CVX}
\end{align}

\noindent Also, since $\log\left(h\right)$ is strictly convex, we
can write
\begin{equation}
h_{dd}h>h_{d}^{2}.\label{eq:Log_H_CVX}
\end{equation}

\noindent Therefore, according to (\ref{eq:Log_FG_CVX}), (\ref{eq:Log_H_CVX}),
it can be seen that $\det\left({\bf Q}\right)>0$. Since all upper
left $n\times n$ determinants of ${\bf Q}$ are positive, we conclude
that the matrix ${\bf Q}$ is PD. Hence, $\nabla^{2}c$ is also PD.

\bibliographystyle{IEEEtran}
\bibliography{TrajectoryDesign}

\begin{thebibliography}{10}
\providecommand{\url}[1]{#1}
\csname url@samestyle\endcsname
\providecommand{\newblock}{\relax}
\providecommand{\bibinfo}[2]{#2}
\providecommand{\BIBentrySTDinterwordspacing}{\spaceskip=0pt\relax}
\providecommand{\BIBentryALTinterwordstretchfactor}{4}
\providecommand{\BIBentryALTinterwordspacing}{\spaceskip=\fontdimen2\font plus
\BIBentryALTinterwordstretchfactor\fontdimen3\font minus
  \fontdimen4\font\relax}
\providecommand{\BIBforeignlanguage}[2]{{%
\expandafter\ifx\csname l@#1\endcsname\relax
\typeout{** WARNING: IEEEtran.bst: No hyphenation pattern has been}%
\typeout{** loaded for the language `#1'. Using the pattern for}%
\typeout{** the default language instead.}%
\else
\language=\csname l@#1\endcsname
\fi
#2}}
\providecommand{\BIBdecl}{\relax}
\BIBdecl

\bibitem{zenZhanLim_tuto}
Y.~Zeng, R.~Zhang, and T.~J. Lim, ``Wireless communications with unmanned
  aerial vehicles: opportunities and challenges,'' \emph{IEEE Communications
  Magazine}, vol.~54, no.~5, pp. 36--42, 2016.

\bibitem{HayatYanmMuz}
S.~Hayat, E.~Yanmaz, and R.~Muzaffar, ``Survey on unmanned aerial vehicle
  networks for civil applications: {A communications viewpoint},'' \emph{IEEE
  Communications Surveys \& Tutorials}, vol.~18, no.~4, pp. 2624--2661, 2016.

\bibitem{MozSaadBennNamDebb}
M.~Mozaffari, W.~Saad, M.~Bennis, Y.-H. Nam, and M.~Debbah, ``{A tutorial on
  UAVs for wireless networks: Applications, challenges, and open problems},''
  \emph{arXiv preprint arXiv:1803.00680}, 2018.

\bibitem{AlhKaLa}
A.~Al-Hourani, S.~Kandeepan, and S.~Lardner, ``Optimal {LAP} altitude for
  maximum coverage,'' \emph{IEEE Wireless Communications Letters}, vol.~3,
  no.~6, pp. 569--572, 2014.

\bibitem{MozSaaBen}
M.~Mozaffari, W.~Saad, M.~Bennis, and M.~Debbah, ``Optimal transport theory for
  power-efficient deployment of unmanned aerial vehicles,'' in \emph{IEEE
  International Conference on Communications (ICC)}, 2016.

\bibitem{AlzEiLagYan}
M.~Alzenad, A.~El-Keyi, F.~Lagum, and H.~Yanikomeroglu, ``{3-D placement of an
  unmanned aerial vehicle base station (UAV-BS) for energy-efficient maximal
  coverage},'' \emph{IEEE Wireless Communications Letters}, vol.~6, no.~4, pp.
  434--437, 2017.

\bibitem{EsrGesbertWSA}
O.~Esrafilian and D.~Gesbert, ``{Simultaneous User Association and Placement in
  Multi-UAV Enabled Wireless Networks},'' in \emph{{22nd International ITG
  Workshop on Smart Antennas (WSA)}}.\hskip 1em plus 0.5em minus 0.4em\relax
  VDE, 2018.

\bibitem{LadPawHyoWen}
{Ladosz Pawel and Hyondong Oh and Wen-Hua Chen}, ``Optimal positioning of
  communication relay unmanned aerial vehicles in urban environments.'' in
  \emph{{IEEE International Conference on Unmanned Aircraft Systems (ICUAS)}},
  2016.

\bibitem{ChenGesb}
J.~Chen and D.~Gesbert, ``Optimal positioning of flying relays for wireless
  networks: A {LOS} map approach,'' in \emph{{IEEE International Conference on
  Communications (ICC)}}, 2017.

\bibitem{GanKerrEsrGes}
R.~Gangula, P.~de~Kerret, O.~Esrafilian, and D.~Gesbert, ``Trajectory
  optimization for mobile access point,'' in \emph{IEEE 51st Asilomar
  Conference on Signals, Systems, and Computers}, 2017.

\bibitem{WuQiYoRu}
Wu, Qingqing, Y.~Zeng, and R.~Zhang, ``Joint trajectory and communication
  design for multi-{UAV} enabled wireless networks.'' \emph{{IEEE Transactions
  on Wireless Communications}}, vol.~17, no.~3, pp. 2109--2121, 2018.

\bibitem{KleKabGir}
A.~T. Klesh, P.~T. Kabamba, and A.~R. Girard, ``Path planning for cooperative
  time-optimal information collection,'' in \emph{IEEE American Control
  Conference}, 2008.

\bibitem{ZenZhanLim}
Y.~Zeng, R.~Zhang, and T.~J. Lim, ``{Throughput maximization for UAV-enabled
  mobile relaying systems},'' \emph{{IEEE Transactions on Communications}},
  vol.~64, no.~12, pp. 4983--4996, 2016.

\bibitem{ZenZhang}
Y.~Zeng and R.~Zhang, ``{Energy-efficient UAV communication with trajectory
  optimization},'' \emph{IEEE Trans. Wireless Commun}, vol.~16, no.~6, pp.
  3747--3760, 2017.

\bibitem{Goldsmith}
A.~Goldsmith, \emph{Wireless communications}.\hskip 1em plus 0.5em minus
  0.4em\relax Cambridge university press, 2005.

\bibitem{ChenYanGes}
J.~Chen, U.~Yatnalli, and D.~Gesbert, ``{Learning radio maps for UAV-aided
  wireless networks: A segmented regression approach},'' in \emph{IEEE
  International Conference on Communications (ICC)}, 2017.

\bibitem{ChenEsrGesMit}
J.~Chen, O.~Esrafilian, D.~Gesbert, and U.~Mitra, ``Efficient algorithms for
  air-to-ground channel reconstruction in {UAV}-aided communications,'' in
  \emph{IEEE Globecom Workshop on Wireless Networking and Control for Unmanned
  Autonomous Vehicles}, 2017.

\bibitem{EsrGes}
O.~Esrafilian and D.~Gesbert, ``3{D} city map reconstruction from {UAV}-based
  radio measurements,'' in \emph{IEEE Global Communication Conference
  (GLOBECOM)}, 2017.

\bibitem{RogGiro}
S.~Rogers and M.~Girolami, \emph{A first course in machine learning.}\hskip 1em
  plus 0.5em minus 0.4em\relax {CRC Press}, 2016.

\bibitem{3GPP}
{3GPP TR 38.901 version 14.0.0 Release 14}, ``{Study on channel model for
  frequencies from 0.5 to 100 GHz},'' ETSI, Tech. Rep., 2017.

\bibitem{Kirk}
D.~E. Kirk, \emph{Optimal control theory: an introduction}.\hskip 1em plus
  0.5em minus 0.4em\relax Courier Corporation, 2012.

\bibitem{XuYin}
Y.~Xu and W.~Yin, ``A block coordinate descent method for regularized
  multiconvex optimization with applications to nonnegative tensor
  factorization and completion.'' \emph{{SIAM Journal on imaging sciences}},
  vol.~6, no.~3, pp. 1758--1789, 2013.

\bibitem{DinDie}
Q.~T. Dinh and M.~Diehl, ``Local convergence of sequential convex programming
  for nonconvex optimization.'' \emph{{Recent Advances in Optimization and its
  Applications in Engineering}}, pp. 93--102, 2010.

\bibitem{CVX}
M.~Grant, S.~Boyd, and Y.~Ye, ``{CVX: Matlab software for disciplined convex
  programming, version 2.0 beta. http://cvxr.com/cvx.}'' 2013.

\bibitem{DeKle}
E.~D. Klerk, \emph{{Appendix A of Aspects of semidefinite programming: interior
  point algorithms and selected applications.}}\hskip 1em plus 0.5em minus
  0.4em\relax Springer Science \& Business Media, 2006, vol.~65.

\bibitem{PreFriGar}
U.~Prells, M.~I. Friswell, and S.~D. Garvey, ``Use of geometric algebra:
  compound matrices and the determinant of the sum of two matrices.'' in
  \emph{Proceedings of the Royal Society of London A: Mathematical, Physical
  and Engineering Sciences}, vol. 459, no. 2030.\hskip 1em plus 0.5em minus
  0.4em\relax The Royal Society, 2003, pp. 273--285.

\bibitem{GilStra}
G.~Strang, \emph{Section 6.5 of Introduction to linear algebra}.\hskip 1em plus
  0.5em minus 0.4em\relax {Wellesley, MA: Wellesley-Cambridge Press}, 1993,
  vol.~3.

\end{thebibliography}

\begin{IEEEbiography}[{\includegraphics[width=1in,height=1.25in,clip,keepaspectratio]{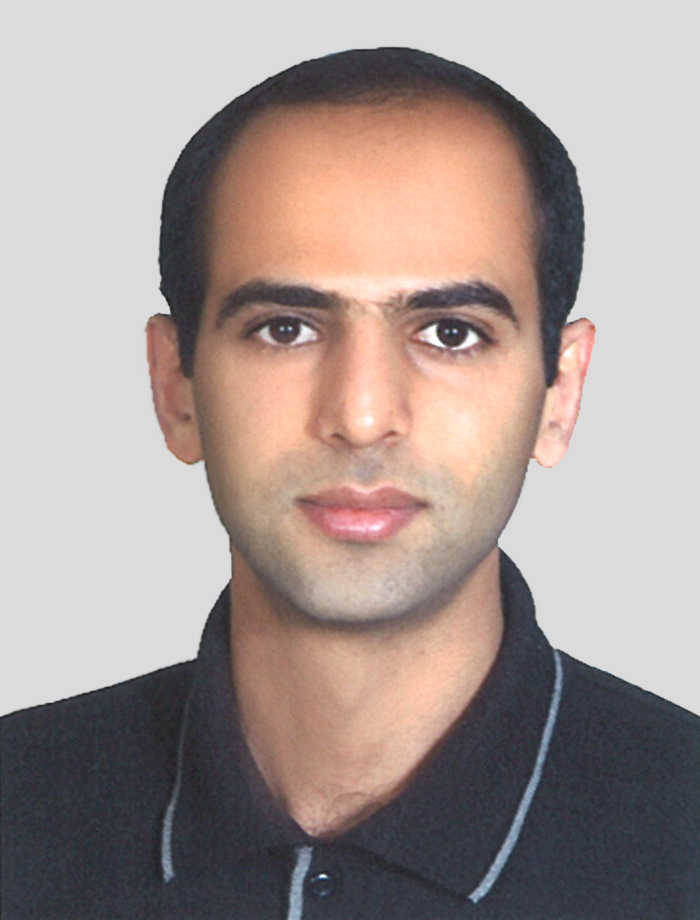}}]{Omid Esrafilian} 
received the B.Sc. degree in control and electrical engineering from K. N. Toosi University of Technology, Tehran, Iran, in 2014. He was accepted as a talented student for M.Sc. studies and he graduated as the ranked first student from K. N. Toosi University of Technology, Tehran, Iran, in 2016. He is currently pursuing the Ph.D. degree at EURECOM (Sorbonne University). He received awards from robotic competitions in the level of national and international. His main research interests include unmanned aerial vehicle communication, path planning and optimization, robotics, and control theory.
\end{IEEEbiography}

\begin{IEEEbiography}[{\includegraphics[width=1in,height=1.25in,clip,keepaspectratio]{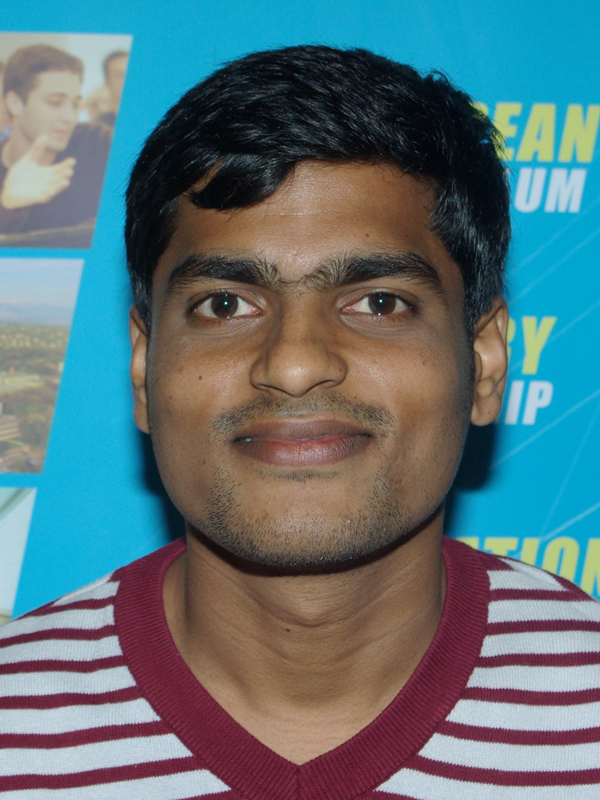}}]{Rajeev Gangula} 
 (S\textquoteright 13, M'15) received the M.Tech. degree
specializing in signal processing from Indian Institute of Technology,
Guwahati, in June 2010. He obtained the M.Sc. and Ph.D. degrees from
Telecom ParisTech (EURECOM) in 2011 and 2015, respectively. He was
awarded the Orange-MEEA scholarship for M.Sc. studies. His research
interests lie in the areas of optimization and communication theory.
\end{IEEEbiography}

\begin{IEEEbiography}[{\includegraphics[width=1in,height=1.25in,clip,keepaspectratio]{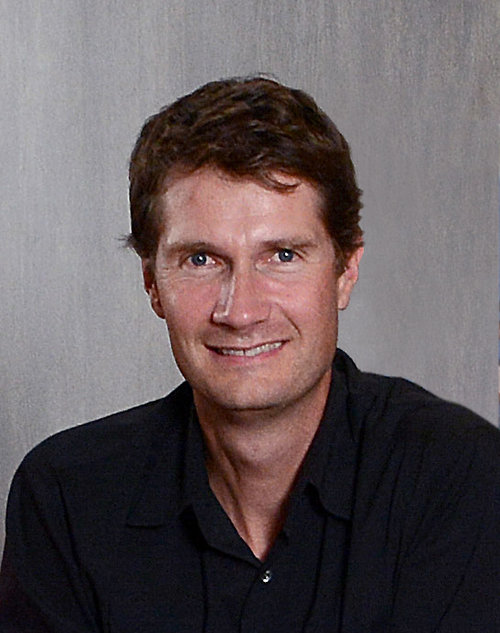}}]{David Gesbert } 
(IEEE Fellow) is Professor and Head of the Communication
Systems Department, EURECOM. He obtained the Ph.D. degree from Ecole
Nationale Superieure des Telecommunications, France, in 1997. From
1997 to 1999 he has been with the Information Systems Laboratory,
Stanford University. He was then a founding engineer of Iospan Wireless
Inc, a Stanford spin off pioneering MIMO-OFDM (now Intel). Before
joining EURECOM in 2004, he has been with the Department of Informatics,
University of Oslo as an adjunct professor. D. Gesbert has published
about 280 papers and 25 patents, some of them winning the 2015 IEEE
Best Tutorial Paper Award (Communications Society), 2012 SPS Signal
Processing Magazine Best Paper Award, 2004 IEEE Best Tutorial Paper
Award (Communications Society), 2005 Young Author Best Paper Award
for Signal Proc. Society journals, and paper awards at conferences
2011 IEEE SPAWC, 2004 ACM MSWiM. He has been a Technical Program Co-chair
for ICC2017. He was named a Thomson-Reuters Highly Cited Researchers
in Computer Science. Since 2015, he holds the ERC Advanced grant \textquotedblleft PERFUME\textquotedblright
on the topic of smart device Communications in future wireless networks.
He held visiting professor positions in KTH (2014) and TU Munich (2016).
Since 2017 he is also a visiting Academic Master within the Program
111 at the Beijing University of Posts and Telecommunications. He
sits in a Board of Directors for the OpenAirInterface Software Alliance. 
\end{IEEEbiography}

\end{document}